\theoremstyle{plain}
\newtheorem{theorem}{Theorem}
\newtheorem{lemma}[theorem]{Lemma}
\newtheorem{corollary}[theorem]{Corollary}
\theoremstyle{nonumberplain}
\newtheorem{claim}{Claim}
\theoremstyle{nonumberplain}
\newtheorem{proof}{Proof.}
\def\cqedsymbol{\ifmmode$\lrcorner$\else{\unskip\nobreak\hfil
\penalty50\hskip1em\null\nobreak\hfil$\lrcorner$
\parfillskip=0pt\finalhyphendemerits=0\endgraf}\fi}
\newtheorem{claimproof}{Proof.}
\newcommand{\wcol}{\mathrm{wcol}}
\newcommand{\col}{\mathrm{col}}
\newcommand{\adm}{\mathrm{adm}}
\newcommand{\Wreach}{\mathrm{WReach}}
\newcommand{\Sreach}{\mathrm{SReach}}
\newcommand{\Oof}{\mathcal{O}}
\newcommand{\CCC}{\mathcal{C}}
\newcommand{\N}{\mathbb{N}}
\renewcommand{\phi}{\varphi}
\newcommand{\strA}{\mathfrak{A}}
\newcommand{\FO}{\mathrm{FO}}
\newcommand{\minor}{\preccurlyeq}
\newcommand{\dist}{\mathrm{dist}}
\title{The Generalised Colouring Numbers on \\Classes of Bounded Expansion\footnote{This work was initiated during Sebastian Siebertz's visit 
at the Institute of Informatics of the University of Warsaw,
which was supported by the Warsaw Centre of Mathematics and Computer Science. Micha\l{} Pilipczuk is supported by the Foundation for Polish Science (FNP) via the START stipend programme. Stephan Kreutzer, Roman Rabinovich and Sebastian Siebertz's research has been supported by the
European Research Council (ERC) under the European Union’s Horizon
2020 research and innovation programme (ERC Consolidator Grant
DISTRUCT, grant agreement No 648527).}}
\author{Stephan Kreutzer, Michał Pilipczuk, Roman Rabinovich, and Sebastian Siebertz}
\begin{document}

\maketitle

\begin{abstract}
  \noindent The generalised colouring numbers $\adm_r(G)$,
  $\col_r(G)$, and $\wcol_r(G)$ were introduced by Kierstead and Yang
  as generalisations of the usual colouring number, also known as the
  degeneracy of a graph, and have since then found important
  applications in the theory of bounded expansion and nowhere dense
  classes of graphs, introduced by Ne\v{s}et\v{r}il and Ossona de
  Mendez. In this paper, we study the relation of the colouring
  numbers with two other measures that characterise nowhere dense
  classes of graphs, namely with uniform quasi-wideness, studied first
  by Dawar et al.\ in the context of preservation theorems for
  first-order logic, and with the splitter game, introduced by Grohe
  et al. We show that every graph excluding a fixed topological minor
  admits a universal order, that is, one order witnessing that the
  colouring numbers are small for every value of $r$. Finally, we use
  our construction of such orders to give a new proof of a result of
  Eickmeyer and Kawarabayashi, showing that the model-checking problem
  for successor-invariant first-order formulas is fixed-parameter
  tractable on classes of graphs with excluded topological minors.
\end{abstract}

\section{Introduction}

The \emph{colouring number $\col(G)$} of a graph $G$ is the minimum
$k$ for which there is a linear order~$<_L$ on the vertices of $G$
such that each vertex $v$ has \emph{back-degree} at most $k-1$, that
is, $v$ has at most $k-1$ neighbours $u$ with $u<_Lv$. The colouring
number is a measure for uniform sparseness in graphs: we have
$\col(G)=k$ if and only if every subgraph $H$ of $G$ has a vertex of
degree at most $k-1$. Hence, provided $\col(G)=k$, not only $G$ is
sparse, but also every subgraph of $G$ is sparse. The colouring number
minus one is also known as the \emph{degeneracy}.

Recently, Ne\v{s}et\v{r}il and Ossona de Mendez introduced the notions
of \emph{bounded expansion}~\cite{nevsetvril2008grad} and
\emph{nowhere density} \cite{nevsetvril2011nowhere} as very general
formalisations of uniform sparseness in graphs. Since then, several
independent and seemingly unrelated characterisations of these notions
have been found, showing that these concepts behave robustly.  For
example, nowhere dense classes of graphs can be defined in terms of
excluded shallow minors~\cite{nevsetvril2011nowhere}, in terms of
uniform quasi-wideness~\cite{dawar2010homomorphism}, a notion studied
in model theory, or in terms of a game~\cite{grohe2014deciding} with
direct algorithmic applications. The \emph{generalised colouring
  numbers} $\adm_r$, $\col_r$, and $\wcol_r$ were introduced by
Kierstead and Yang~\cite{kierstead2003orders} in the context of
colouring and marking games on graphs. As proved by Zhu
\cite{zhu2009colouring}, they can be used to characterise both bounded
expansion and nowhere dense classes of graphs.

The invariants $\adm_r$, $\col_r$, and $\wcol_r$ are defined similarly
to the classic colouring number: for example, the \emph{weak
  $r$-colouring} number $\wcol_r(G)$ of a graph $G$ is the minimum
integer~$k$ for which there is a linear order of the vertices such
that each vertex $v$ can reach at most $k-1$ vertices $w$ by a path of
length at most~$r$ in which~$w$ is the smallest vertex on the
path. 

The generalised colouring numbers found important applications in the
context of algorithmic theory of sparse graphs. For example, they play
a key role in Dvo\v{r}\'ak's approximation algorithm for minimum
dominating sets \cite{dvovrak13}, or in the construction of sparse
neighbourhood covers on nowhere dense classes, a fundamental step in
the almost linear time model-checking algorithm for first-order
formulas of Grohe et al.~\cite{grohe2014deciding}.
  
In this paper we study the relation between the colouring numbers and
the above mentioned characterisations of nowhere dense classes of
graphs, namely with uniform quasi-wideness and the splitter game. We
use the generalised colouring numbers to give a new proof that every
bounded expansion class is uniformly quasi-wide.  This was first
proved by Ne\v{s}et\v{r}il and Ossona de Mendez in
\cite{nevsetvril2010first}; however, the constants appearing in the
proof of~\cite{nevsetvril2010first} are huge. We present a very simple
proof which also improves the appearing constants. Furthermore, for
the splitter game introduced in~\cite{grohe2014deciding}, we show that
splitter has a very simple strategy to win on any class of bounded
expansion, which leads to victory much faster than in general nowhere
dense classes of graphs.

Every graph $G$ from a fixed class $\CCC$ of bounded expansion
satisfies $\wcol_r(G)\leq f(r)$ for some function $f$ and all positive
integers~$r$. However, the order that witnesses this inequality for
$G$ may depend on the value $r$. We say that a class $\CCC$ admits
\emph{uniform orders} if there is a function $f:\N\rightarrow\N$ such
that for each $G\in\CCC$ there is one linear order that witnesses
$\wcol_r(G)\leq f(r)$ for every value of $r$.  We show that every
class that excludes a fixed topological minor admits uniform orders
that can be computed efficiently.

Finally, based on our construction of uniform orders for graphs that
exclude a fixed topological minor, we provide an alternative proof of
a very recent result of Eickmeyer and
Kawarabayashi~\cite{eickmeyer2016model}, that the model-checking
problem for successor-invariant first-order ($\FO$) formulas is
fixed-para\-meter tractable on such classes (we obtained this result
independently of but later than~\cite{eickmeyer2016model}).
Successor-invariant logics have been studied in database theory and
finite model theory, and successor-invariant $\FO$ is known to be more
expressive than plain $\FO$~\cite{rossman2007successor}.  The
model-checking problem for successor-invariant $\FO$ is known to be
fixed-parameter tractable parameterized by the size of the formula on
any graph class that excludes a fixed minor~\cite{eickmeyer2013model}.
Very recently, this result was lifted to classes that exclude a fixed
topological minor by Eickmeyer and
Kawarabayashi~\cite{eickmeyer2016model}. 
The key point of their proof
is to use the decomposition theorem for graphs excluding a fixed
topological minor, due to Grohe and Marx~\cite{grohe2015structure}.
Our approach is similar to that of~\cite{eickmeyer2016model}.
However, we employ new constructions based on the generalised
colouring numbers and use the decomposition theorem
of~\cite{grohe2015structure} only implicitly. In particular, we do not
construct a graph decomposition in order to solve the model-checking
problem. Therefore, we believe that our approach may be easier to
extend further to classes of bounded expansion, or even to nowhere
dense classes of graphs.

\section{Preliminaries}

\subparagraph*{Notation.} We use standard graph-theoretical notation;
see e.g.~\cite{diestel2012graph} for reference.  All graphs considered
in this paper are finite, simple, and undirected.  For a graph $G$, by
$V(G)$ and $E(G)$ we denote the vertex and edge sets of $G$,
respectively.  A graph~$H$ is a \emph{subgraph} of~$G$, denoted
$H\subseteq G$, if $V(H)\subseteq V(G)$ and $E(H)\subseteq E(G)$.  For
any $M\subseteq V(G)$, by $G[M]$ we denote the subgraph induced by
$M$.  We write $G-M$ for the graph $G[V(G)\setminus M]$ and if
$M=\{v\}$, we write $G-v$ for $G-M$. For a non-negative integer
$\ell$, a \emph{path of length $\ell$} in~$G$ is a sequence
$P=(v_1,\ldots, v_{\ell+1})$ of pairwise different vertices such that
$v_iv_{i+1}\in E(G)$ for all $1\leq i\leq \ell$. We write $V(P)$ for
the vertex set $\{v_1,\ldots, v_{\ell+1}\}$ of~$P$ and $E(P)$ for the
edge set $\{v_iv_{i+1} : 1\leq i\leq \ell\}$ of~$P$ and identify~$P$
with the subgraph of $G$ with vertex set $V(P)$ and edge set
$E(P)$. We say that the path $P$ \emph{connects} its \emph{endpoints}
$v_1,v_{\ell+1}$, whereas $v_2,\ldots, v_\ell$ are the \emph{internal
  vertices} of $P$. The {\em{length}} of a path is the number of its
edges.  Two vertices $u,v\in V(G)$ are \emph{connected} if there is a
path in $G$ with endpoints $u,v$.  The {\em{distance}} $\dist(u,v)$
between two connected vertices $u,v$ is the minimum length of a path
connecting $u$ and $v$; if $u,v$ are not connected, we put
$\dist(u,v)=\infty$. The \emph{radius} of~$G$ is
$\min_{u\in V(G)}\max_{v\in V(G)}\dist(u,v)$. The set of all
neighbours of a vertex $v$ in $G$ is denoted by $N^G(v)$, and the set
of all vertices at distance at most $r$ from $v$ is denoted by
$N^G_r(v)$.  A graph $G$ is $c$-\emph{degenerate} if every subgraph
$H\subseteq G$ has a vertex of degree at most $c$. A $c$-degenerate
graph of order $n$ contains an independent set of order at least
$n/(c+1)$.

A graph $H$ with $V(H)=\{v_1,\ldots, v_n\}$ is a \emph{minor} of $G$,
written $H\minor G$, if there are pairwise disjoint connected
subgraphs $H_1,\ldots, H_n$ of $G$, called {\em{branch sets}}, such
that whenever $v_iv_j\in E(H)$, then there are $u_i\in H_i$ and
$u_j\in H_j$ with $u_iu_j\in E(G)$. We call $(H_1,\ldots, H_n)$ a
{\em{minor model}} of $H$ in~$G$. The graph $H$ is a \emph{topological
  minor} of $G$, written $H\minor^t G$, if there are pairwise
different vertices $u_1, \ldots, u_n\in V(G)$ and a family of paths
$\{P_{ij}\ \colon\ v_iv_j\in E(H)\}$, such that each $P_{ij}$ connects
$u_i$ and $u_j$, and paths $P_{ij}$ are pairwise internally
vertex-disjoint.

\vspace{-0.5cm}
\subparagraph*{Generalised colouring numbers.} Let us fix a graph
$G$. By $\Pi(G)$ we denote the set of all linear orders of $V(G)$. For
$L\in\Pi(G)$, we write $u<_L v$ if $u$ is smaller than $v$ in $L$, and
$u\le_L v$ if $u<_L v$ or $u=v$.  Let $u,v\in V(G)$. For a
non-negative integer $r$, we say that $u$ is \emph{weakly
  $r$-reachable} from~$v$ with respect to~$L$, if there is a path $P$
of length $\ell$, $0\le\ell\le r$, connecting $u$ and~$v$ such that
$u$ is minimum among the vertices of $P$ (with respect to $L$). By
$\Wreach_r[G,L,v]$ we denote the set of vertices that are weakly
$r$-reachable from~$v$ w.r.t.\ $L$.

Vertex $u$ is \emph{strongly $r$-reachable} from $v$ with respect
to~$L$, if there is a path $P$ of length~$\ell$, $0\le\ell\le r$,
connecting $u$ and $v$ such that $u\le_Lv$ and such that all internal
vertices $w$ of~$P$ satisfy $v<_Lw$. Let $\Sreach_r[G,L,v]$ be the set
of vertices that are strongly $r$-reachable from~$v$ w.r.t.\ $L$. Note
that we have $v\in \Sreach_r[G,L,v]\subseteq \Wreach_r[G,L,v]$.

For a non-negative integer $r$, we define the \emph{weak $r$-colouring
  number $\wcol_r(G)$} of $G$ and the \emph{$r$-colouring number
  $\col_r(G)$} of $G$ respectively as follows:
\begin{eqnarray*}
\wcol_r(G)& := & \min_{L\in\Pi(G)}\:\max_{v\in V(G)}\:
\bigl|\Wreach_r[G,L,v]\bigr|,\\
\col_r(G) & := & \min_{L\in\Pi(G)}\:\max_{v\in V(G)}\:
\bigl|\Sreach_r[G,L,v]\bigr|.
\end{eqnarray*}

For a non-negative integer $r$, the \emph{$r$-admissibility}
$\adm_r[G,L, v]$ of $v$ w.r.t.\ $L$ is the maximum size~$k$ of a
family $\{P_1,\ldots,P_k\}$ of paths of length at most $r$ that start
in~$v$, end at a vertex $w$ with $w\leq_Lv$, and satisfy
$V(P_i)\cap V(P_j)=\{v\}$ for all $1\leq i< j\leq k$. As for $r>0$ we
can always let the paths end in the first vertex smaller than $v$, we
can assume that the internal vertices of the paths are larger
than~$v$.  Note that $\adm_r[G,L,v]$ is an integer, whereas
$\Wreach_r[G,L, v]$ and $\Sreach_r[G,L,v]$ are vertex sets.  The
\emph{$r$-admissibility} $\adm_r(G)$ of~$G$~is
\begin{eqnarray*} 
\adm_r(G) & = & \min_{L\in\Pi(G)}\max_{v\in V(G)}\adm_r[G,L,v].
\end{eqnarray*} 
The generalised colouring numbers were introduced by Kierstead and
Yang \cite{kierstead2003orders} in the context of colouring and marking
games on graphs. The authors also proved that the generalised
colouring numbers are related by the following inequalities:
\begin{equation}\label{eq:gen-col-ineq}
\adm_r(G)\leq \col_r(G)\le \wcol_r(G)\le (\adm_r(G))^r.
\end{equation}

\subparagraph*{Shallow minors, bounded expansion, and nowhere
  denseness.} A graph $H$ with $V(H)=\{v_1,\ldots, v_n\}$ is a
{\em{depth-$r$ minor}} of $G$, denoted $H\minor_rG$, if there is a
minor model $(H_1,\ldots,H_n)$ of $H$ in $G$ such that each $H_i$ has
radius at most $r$. We write $d(H)$ for the \emph{average degree}
of~$H$, that is, for the number $2|E(H)|/|V(H)|$.  A class $\CCC$ of
graphs has \emph{bounded expansion} if there is a function
$f\colon\N\rightarrow\N$ such that for all non-negative integers $r$
we have $d(H)\leq f(r)$ for every $H\minor_r G$ with $G\in\CCC$.  A
class $\CCC$ of graphs is \emph{nowhere dense} if for every real
$\epsilon>0$ and every non-negative integer~$r$, there is an integer
$n_0$ such that if $H$ is an $n$-vertex graph with $n\geq n_0$ and
$H\minor_r G$ for some $G\in\CCC$, then~$d(H)\leq n^\epsilon$.

Bounded expansion and nowhere dense classes of graphs were introduced
by Ne\v{s}et\v{r}il and Ossona de Mendez as models for uniform
sparseness of graphs
\cite{nevsetvril2008grad,nevsetvril2011nowhere}. As proved by
Zhu~\cite{zhu2009colouring}, the generalised colouring numbers are
tightly related to densities of low-depth minors, and hence they can
be used to characterise bounded expansion and nowhere dense classes.

\begin{theorem}[Zhu \cite{zhu2009colouring}] A class $\CCC$ of graphs
  has bounded expansion if and only if there is a function
  $f:\N\rightarrow\N$ such that $\wcol_r(G)\leq f(r)$ for all $r\in\N$
  and all $G\in\CCC$.
\end{theorem}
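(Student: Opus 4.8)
I would prove the two implications separately, as they are of quite different difficulty. The direction ``$\wcol_r$ bounded $\Rightarrow$ bounded expansion'' is the easy one. Fix $G\in\CCC$, an integer $r$, and a depth-$r$ minor $H$ with $V(H)=\{v_1,\dots,v_n\}$ realised by branch sets $H_1,\dots,H_n$, each of radius at most~$r$. Pick a linear order $L$ of $V(G)$ with $|\Wreach_{4r+1}[G,L,u]|\le f(4r+1)$ for every $u\in V(G)$, let $m_i$ be the $L$-minimum vertex of $H_i$, and order $V(H)$ by $v_i\prec v_j$ iff $m_i<_L m_j$ (the $m_i$ are distinct since the $H_i$ are disjoint). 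I claim $\prec$ witnesses that $H$ is $(f(4r+1)-1)$-degenerate: if $v_iv_j\in E(H)$ with $m_i<_L m_j$, pick an edge $u_iu_j\in E(G)$ with $u_i\in H_i$ and $u_j\in H_j$, and concatenate a path from $m_j$ to $u_j$ inside $H_j$ (length at most $2r$, routed through the centre of $H_j$), the edge $u_ju_i$, and a path from $u_i$ to $m_i$ inside $H_i$ (length at most $2r$). The result is a path of length at most $4r+1$ all of whose vertices lie in $H_i\cup H_j$ and are therefore $\ge_L m_i$, so $m_i\in\Wreach_{4r+1}[G,L,m_j]$; as distinct $\prec$-predecessors of $v_j$ give distinct such $m_i$, the $\prec$-back-degree of $v_j$ is at most $f(4r+1)-1$. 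Hence $d(H)\le 2(f(4r+1)-1)$, and since $H$ and $r$ were arbitrary, $\CCC$ has bounded expansion, witnessed by $r\mapsto 2f(4r+1)$.

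For the converse, ``bounded expansion $\Rightarrow\wcol_r$ bounded'', the inequality $\wcol_r(G)\le(\adm_r(G))^r$ from~\eqref{eq:gen-col-ineq} reduces the task to bounding $\adm_r(G)$ by some $h(r)$ depending only on the expansion function of~$\CCC$. My plan would be the iterated augmentation scheme of Ne\v{s}et\v{r}il and Ossona de Mendez: first orient $G$ with bounded in-degree, which is possible since bounded expansion at depth~$0$ makes $G$ degenerate; then repeatedly apply \emph{transitive} augmentations (an arc $x\to z$ is added whenever $x\to y\to z$) and \emph{fraternal} augmentations (an arc between the tails is added whenever two arcs share a head, oriented so as to keep in-degrees small). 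Each transitive round roughly doubles the distance a single arc can ``shortcut'', so after $\Oof(\log r)$ rounds every path of length at most $r$ in $G$ collapses to a short certificate in the augmented digraph, and the linear order read off from a low-in-degree orientation of the final augmentation witnesses $\adm_r(G)\le h(r)$. The base case $r=1$ is just $\wcol_1(G)=\col(G)$, which equals the degeneracy plus one and is therefore controlled by bounded expansion at depth~$0$.

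The main obstacle, and the quantitative core of the whole theorem, is keeping the in-degrees bounded across these $\Oof(\log r)$ augmentation rounds. One must show that each augmented digraph is an orientation of a bounded-depth minor of $G$, with the depth growing only as a function of the round number; bounded expansion then bounds its average degree, hence its degeneracy, hence (after reorienting) its in-degree. Verifying that transitive and fraternal augmentations raise the relevant minor-depth only in a controlled way is exactly the step that uses the full strength of bounded expansion---density bounds at \emph{all} depths, not just depth~$0$---and is where I would concentrate the effort. An alternative packaging routes through the grad invariants $\nabla_s$: bounded expansion is equivalent to bounded $\nabla_s$ for all $s$, and one can bound $\adm_{2s+1}(G)$ in terms of $\nabla_s(G)$ by a greedy construction of the order in which, in every induced subgraph, Menger's theorem extracts a vertex with few internally disjoint short escape paths from the absence of a dense depth-$s$ minor; but the combinatorial heart---turning a bad reachability pattern into a dense shallow minor---is the same either way.
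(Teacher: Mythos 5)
The paper itself offers no proof of this theorem---it is quoted as Zhu's result---so your proposal can only be measured against the known arguments. Your first direction (bounded $\wcol_r$ implies bounded expansion) is complete and correct: ordering the branch vertices of a depth-$r$ minor $H$ by the $L$-minimal vertices $m_i$ of the branch sets, and concatenating a path of length at most $2r$ inside $H_j$, the connecting edge, and a path of length at most $2r$ inside $H_i$, does show $m_i\in\Wreach_{4r+1}[G,L,m_j]$, whence $H$ is $(f(4r+1)-1)$-degenerate and $d(H)\le 2f(4r+1)$. This is the easy half and you have it.

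The converse, which is the actual content of Zhu's theorem, is only a plan, and the specific intermediate claim you propose to prove is false. You write that one must show ``each augmented digraph is an orientation of a bounded-depth minor of $G$''; it is not, in general. Take $G$ to be the star with every edge subdivided once (centre $c$, middle vertices $m_i$, leaves $l_i$). In any orientation with in-degree bounded by $d<n$, at least $n-d$ of the edges $cm_i$ are oriented $c\to m_i$, and then for each such $i$ one round of transitive--fraternal augmentation adds the edge $cl_i$ (fraternally if $l_i\to m_i$, transitively if $m_i\to l_i$), creating the triangle $c,m_i,l_i$. Since $G$ is a tree, every minor of $G$ of any depth is a forest, so the augmented graph is not a minor of $G$ at all. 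What the Ne\v{s}et\v{r}il--Ossona de Mendez analysis actually establishes is a density-transfer statement: the grads $\nabla_a$ of the augmentation are bounded in terms of $\nabla_b(G)$ for suitably larger $b$ together with the in-degree bound of the orientation---a weaker but sufficient fact, and proving it is the real work. Likewise, your alternative route via $\wcol_r(G)\le(\adm_r(G))^r$ and bounding $\adm_{2s+1}$ by $\nabla_s$ correctly names Dvo\v{r}\'ak's lemma (a vertex with many internally disjoint short paths to smaller vertices in every ordering forces a dense shallow minor) as the combinatorial heart, but you do not prove it; as written, the hard direction of the theorem remains open in your argument.
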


Due to Inequality~(\ref{eq:gen-col-ineq}), we may equivalently demand that there
is a function $f:\N\rightarrow\N$ such that $\adm_r(G)\leq f(r)$ or
$\col_r(G)\leq f(r)$ for all non-negative integers $r$ and all
$G\in\CCC$.

Similarly, from Zhu's result one can derive a characterisation of
nowhere dense classes of graphs, as presented in
\cite{nevsetvril2011nowhere}.  A class $\CCC$ of graphs is called
\emph{hereditary} if it is closed under induced subgraphs, that is, if
$H$ is an induced subgraph of $G\in\CCC$, then $H\in\CCC$.

\begin{theorem}[Ne\v{s}et\v{r}il and Ossona de Mendez
  \cite{nevsetvril2011nowhere}] A hereditary class $\CCC$ of graphs is
  no\-where dense if and only if for every real $\epsilon>0$ and every
  non-negative integer $r$, there is a positive integer $n_0$ such
  that if $G\in\CCC$ is an $n$-vertex graph with $n\geq n_0$, then
  $\wcol_r(G)\leq n^\epsilon$.
\end{theorem}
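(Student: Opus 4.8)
The plan is to recast the equivalence in terms of densities of bounded-depth minors and then invoke the quantitative form of Zhu's theorem~\cite{zhu2009colouring}. Writing $\nabla_r(G)$ for $\tfrac12\max\{d(H):H\minor_r G\}$, the two inputs I would use are Zhu's bounds: there is a polynomial $q_r$ (of degree $O(r)$) such that for every graph $G$,
\[
  \nabla_r(G)\ \le\ \binom{\wcol_{2r+1}(G)}{2}
  \qquad\text{and}\qquad
  \wcol_r(G)\ \le\ q_r\bigl(\nabla_{r-1}(G)\bigr),
\]
where the second bound uses the monotonicity $\nabla_i(G)\le\nabla_{i+1}(G)$. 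Both estimates are polynomial, so with them at hand each implication reduces to bookkeeping, except for one localisation step in the forward direction.

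For the implication from the $\wcol$-condition to nowhere density, I would fix $\epsilon>0$ and $r\in\N$ and take an $n$-vertex graph $H$ with $H\minor_r G$ for some $G\in\CCC$. Since $G$ can be arbitrarily large, the hypothesis cannot be applied to it directly, so the first step is to shrink $G$: given a depth-$r$ minor model $(H_1,\dots,H_n)$, fix a centre $c_i$ in each $H_i$, fix a connecting edge for each edge of $H$, and inside $H_i$ retain only the union $T_i$ of the shortest paths from $c_i$ to the at most $\deg_H(v_i)$ endpoints of connecting edges lying in $H_i$; then $(T_1,\dots,T_n)$ is a depth-$r$ minor model of $H$ in a subgraph $G'\subseteq G$ with $|V(G')|\le n+rn\,d(H)\le(r+1)n^2$. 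Since $\CCC$ is hereditary, $G[V(G')]\in\CCC$, and since $\wcol_{2r+1}$ is monotone under taking subgraphs,
\[
  \tfrac12 d(H)\ \le\ \nabla_r(G')\ \le\ \binom{\wcol_{2r+1}(G')}{2}\ \le\ \binom{\wcol_{2r+1}(G[V(G')])}{2}.
\]
Choosing $\epsilon'$ small enough in terms of $\epsilon$ and $r$ and applying the hypothesis to $G[V(G')]$ at depth $2r+1$ gives $\wcol_{2r+1}(G[V(G')])\le|V(G')|^{\epsilon'}\le((r+1)n^2)^{\epsilon'}$ whenever $|V(G')|$ exceeds the corresponding threshold, and a short computation then yields $d(H)\le n^\epsilon$ for all large $n$; if $|V(G')|$ lies below the threshold then $n$ is bounded, and enlarging $n_0$ settles that case.

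For the converse I would assume $\CCC$ is nowhere dense, fix $\epsilon>0$ and $r\in\N$, and apply the nowhere-density condition at the finitely many depths $0,\dots,r-1$ with a small parameter $\delta>0$; since a depth-$i$ minor of an $n$-vertex graph has at most $n$ vertices, this produces a threshold beyond which every $G\in\CCC$ on $n$ vertices satisfies $\nabla_i(G)\le n^\delta$ for all $i<r$ at once. The second Zhu bound then gives $\wcol_r(G)\le q_r(\nabla_{r-1}(G))\le q_r(n^\delta)$, which is at most $n^\epsilon$ for large $n$ as soon as $\delta$ is chosen small relative to $\epsilon$ and $\deg q_r$.

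The step I expect to be the crux is the localisation in the forward direction: because nowhere density constrains the size of the minor $H$ rather than that of the host $G$, one must first realise $H$ as a depth-$r$ minor of a subgraph of $G$ whose order is polynomial in $|V(H)|$, and it is precisely here that hereditariness of $\CCC$ and subgraph-monotonicity of $\wcol_{2r+1}$ come in. The remaining difficulty is deferred to the harder half of Zhu's theorem, the bound $\wcol_r(G)\le q_r(\nabla_{r-1}(G))$, which in turn rests on the inequality $\wcol_r\le\adm_r^{\,r}$ from~\eqref{eq:gen-col-ineq} together with a polynomial bound on $\adm_r$ in terms of $\nabla_{r-1}$.
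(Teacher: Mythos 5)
The paper does not prove this statement at all: it is quoted as a known result of Ne\v{s}et\v{r}il and Ossona de Mendez, with only the remark that it ``can be derived from Zhu's result,'' and your proposal is exactly that standard derivation --- the quantitative relations $\nabla_r(G)\le\binom{\wcol_{2r+1}(G)}{2}$ and $\wcol_r\le q_r(\nabla_{r-1})$ (via $\wcol_r\le\adm_r^{\,r}$ and Dvo\v{r}\'ak's polynomial bound on $\adm_r$), combined with the localisation of a depth-$r$ minor model into an induced subgraph of order polynomial in $|V(H)|$, where hereditariness and subgraph-monotonicity of $\wcol$ are used. The argument is sound as written; the only caveat is that it leans on the quantitative (polynomial, with degree depending only on $r$) form of Zhu's bounds as black boxes, but since only ``degree bounded in terms of $r$'' is needed, this does not affect correctness.
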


As shown in \cite{dvovrak13}, for every non-negative integer $r$,
computing $\adm_r(G)$ is fixed-parameter tractable on any class of
bounded expansion (parameterized by $\adm_r(G)$). For $\col_r(G)$ and
$\wcol_r(G)$ this is not known; however, by~\eqref{eq:gen-col-ineq} we
can use admissibility to obtain approximations of these numbers.  On
nowhere dense classes of graphs, for every $\epsilon>0$ and every
non-negative integer $r$, we can compute an order that witnesses
$\wcol_r(G)\leq n^\epsilon$ in time $\Oof(n^{1+\epsilon})$ if $G$ is
sufficiently large \cite{grohe2014deciding}, based on Ne\v{s}et\v{r}il
and Ossona de Mendez's augmentation technique
\cite{nevsetvril2008grad}.

\section{Uniform quasi-wideness and the splitter game}

In this section we discuss the relation between weak $r$-colouring
numbers and two notions that characterise nowhere dense classes:
uniform quasi-wideness and the splitter game.

For a graph $G$, a vertex subset $A\subseteq V(G)$ is called
\emph{$r$-independent} in $G$, if $\dist_G(a,b)>r$ for all different
$a,b\in V(G)$. A vertex subset is called \emph{$r$-scattered}, if it
is $2r$-independent, that is, if the $r$-neighbourhoods of different
elements of $A$ do not intersect.

Informally, uniform quasi-wideness means the following: in any large
enough subset of vertices of a graph from $\CCC$, one can find a large
subset that is $r$-scattered in $G$, possibly after removing from $G$
a small number of vertices.  Formally, a class $\CCC$ of graphs is
\emph{uniformly quasi-wide} if there are functions
$N:\N\times\N\rightarrow\N$ and $s:\N\rightarrow\N$ such that for all
$m, r\in\N$, if $W\subseteq V(G)$ for a graph $G\in\CCC$ with
$|W|>N(m,r)$, then there is a set $S\subseteq V(G)$ of size at most
$s(r)$ such that $W$ contains a subset of size at least $m$ that is
$r$-scattered in $G-S$.

The notion of quasi-wideness was introduced by
Dawar~\cite{dawar2010homomorphism} in the context of homomorphism
preservation theorems. It was shown in \cite{nevsetvril2010first} that
classes of bounded expansion are uniformly quasi-wide and that uniform
quasi-wideness characterises nowhere dense classes of graphs.

\begin{theorem}[Ne\v{s}et\v{r}il and Ossona de Mendez
  \cite{nevsetvril2010first}] A hereditary class $\CCC$ of graphs is
  no\-where dense if and only if it is uniformly quasi-wide.
\end{theorem}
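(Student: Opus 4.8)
The plan is to prove the two implications of the biconditional separately; in both directions we reduce to the weak-colouring-number characterisations recorded above.

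\textbf{Nowhere denseness implies uniform quasi-wideness.} Since passing to the hereditary closure preserves nowhere denseness (the low-depth minors of an induced subgraph are low-depth minors of the ambient graph) and since uniform quasi-wideness of a class is inherited by all of its subclasses, we may assume $\CCC$ is hereditary, so that the $\wcol$-characterisation of nowhere dense classes applies. Fix $m$ and $r$. The crucial elementary observation is: for any $L\in\Pi(G)$ and any $S\subseteq V(G)$, if the sets $\Wreach_{2r}[G-S,L,a]$ with $a\in A$ are pairwise disjoint, then $A$ is $r$-scattered in $G-S$ — indeed, a path of length at most $2r$ between distinct $a,b\in A$ in $G-S$ would have its $L$-minimum vertex $w$ lying in $\Wreach_{2r}[G-S,L,a]\cap\Wreach_{2r}[G-S,L,b]$ (split the path at $w$), contradicting disjointness. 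Hence, fixing an order $L$ witnessing $\wcol_{2r}(G)\le c$ with $c\le|V(G)|^{\epsilon}$ for large $G$, it suffices to find a large $A\subseteq W$ and a small $S$ for which the restricted reachability sets $\Wreach_{2r}[G-S,L,a]$, $a\in A$, become pairwise disjoint. This I would do by a peeling procedure: as long as there is a vertex $u$ that is weakly $2r$-reachable in $G-S$ from a $\tfrac{1}{c}$-fraction of the surviving part of $W$, move $u$ into $S$ and restrict $W$ to that fraction (excluding $u$ itself); each such step strictly decreases the bound on the reachability-set sizes of the surviving vertices, so there are at most $c-1$ steps, and once the procedure stops the surviving reachability sets are spread out, from which one extracts a large $A\subseteq W$ with pairwise-disjoint reachability sets in $G-S$. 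Organising this selection step so that $|S|$ remains bounded in terms of $r$ is the delicate point, discussed below.

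\textbf{Uniform quasi-wideness implies nowhere denseness (for hereditary $\CCC$).} I would argue by contraposition. If a hereditary class $\CCC$ is somewhere dense, then by the standard subdivision characterisation of somewhere dense hereditary classes there is a fixed $r_0$ such that for every $n$ the class $\CCC$ contains a graph $G_n$ that is a subdivision of $K_n$ with every edge subdivided at most $r_0$ times. Put $r:=r_0+1$, and let $W$ be the set of the $n$ branch vertices of $G_n$. Assume for contradiction that $\CCC$ is uniformly quasi-wide with separator function $s$, and take $n$ large enough. Applied to $W$, uniform quasi-wideness yields $S\subseteq V(G_n)$ with $|S|\le s(r)$ and an $r$-scattered subset $A\subseteq W$ with $|A|\ge 2$ in $G_n-S$. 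Pick distinct $a,b\in A$. Each of the $n-2$ branch vertices $w\notin\{a,b\}$ is joined to $a$ by the subdivided edge $aw$, a path of length at most $r_0+1=r$; this path survives in $G_n-S$ unless $w\in S$ or one of the (private) subdivision vertices of $aw$ lies in $S$, which excludes at most $2|S|$ of the $w$'s, and likewise for $b$. Thus for $n>4s(r)+2$ there is a branch vertex $w\in W\setminus S$ with both the $a$–$w$ and $w$–$b$ paths present in $G_n-S$, so $\dist_{G_n-S}(a,b)\le 2r$, contradicting that $A$ is $r$-scattered.

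The substantial half is the forward direction, and within it the one genuine obstacle is to keep the deletion set $S$ of size bounded by a function of $r$ alone, rather than by $c=\wcol_{2r}(G)$ — which for a nowhere dense (but not bounded expansion) class is unbounded over $G\in\CCC$. For a class of bounded expansion this issue does not arise, since then $c$ is already bounded in terms of $r$ and the peeling above directly outputs $|S|<c$ and $|A|$ at least a $c^{-O(c)}$ fraction of $|W|$; this is essentially the simple proof announced in the introduction, and it is the part of this theorem that the present paper revisits using generalised colouring numbers. For the nowhere dense case the peeling must instead be organised by an induction on $r$, together with the observation that if the procedure removed too many vertices then a large subset of them would pairwise lie within distance $O(r)$, whence contracting radius-$O(r)$ balls around them would exhibit a shallow minor that is too dense for a nowhere dense class; turning this into quantitative control of $|S|$ and of the threshold function $N(m,r)$ is the technical heart of the argument.
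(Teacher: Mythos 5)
There is a genuine gap, and you have in fact named it yourself: your forward direction (nowhere dense implies uniformly quasi-wide) is only carried out under the assumption that $c=\wcol_{2r}(G)$ is bounded in terms of $r$, i.e.\ for classes of bounded expansion. For a nowhere dense class the $\wcol$-characterisation only gives $c\le n^{\epsilon}$, so your peeling procedure yields a deletion set of size up to $c-1$, which is \emph{not} bounded by any function $s(r)$ of $r$ alone — and bounding $|S|$ by $s(r)$ independently of $G$ is exactly what uniform quasi-wideness demands. The closing remark that one should instead induct on $r$ and extract a dense shallow minor from a too-large deletion set is a gesture at the actual Ne\v{s}et\v{r}il--Ossona de Mendez argument (an iterated Ramsey-type induction), but none of it is executed: no invariant for the induction, no extraction of the shallow clique minor, no quantitative bounds on $s(r)$ or $N(m,r)$. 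Since this is, as you say, ``the technical heart of the argument,'' the proposal does not constitute a proof of the stated theorem. Note also that the paper itself does not prove this equivalence — it cites it and only proves (with explicit constants) the bounded-expansion analogue via an order witnessing $\wcol_r(G)\le c$; the part of your argument that does go through is essentially a variant of that, using disjointness of the sets $\Wreach_{2r}[G-S,L,a]$ in place of the paper's greedy independent set in the weak-reachability intersection graph followed by iterative halving.

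Your backward direction is essentially sound: applying uniform quasi-wideness to the branch vertices of a shallow subdivision of $K_n$ and counting how many of the internally disjoint subdivision paths a set $S$ of size $s(r)$ can destroy does yield two scattered vertices at distance at most $2r$, a contradiction. Two caveats: you invoke the characterisation of somewhere dense hereditary classes via bounded subdivisions of arbitrarily large cliques appearing as subgraphs, which is itself a nontrivial theorem of Ne\v{s}et\v{r}il and Ossona de Mendez (the equivalence of the minor and topological-minor versions of nowhere density) and is not available from anything stated in this paper; and strictly speaking the ambient graph $G_n\in\CCC$ only \emph{contains} the subdivision as a subgraph, so the argument should be phrased (as your counting implicitly does) inside $G_n$ rather than assuming the subdivision itself lies in $\CCC$. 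These are repairable; the unproven forward direction is the substantive defect.
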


It was shown by Atserias et al.\@ in \cite{atserias2006preservation} that classes
that exclude $K_k$ 
as a minor are uniformly quasi-wide. In fact, in this case we can choose
$s(r)=k-1$, independent of $r$ (if such a constant function for a class $\CCC$
exists, the class is called \emph{uniformly almost wide}). However, the function
$N(m,r)$ that was used in the proof is huge: it comes from an iterated Ramsey
argument. The same approach was used in \cite{nevsetvril2010first} to show that
every nowhere dense class, and in particular, every class of bounded expansion,
is uniformly quasi-wide.  We present a new proof that every bounded expansion
class is uniformly quasi-wide, which gives us a much better bound on $N(m,r)$
and which is much simpler than the previously known proof.

\begin{theorem} Let $G$ be a graph and let $r,m\in \N$. Let $c\in\N$ be such
  that $\wcol_r(G)\leq c$ and let $A\subseteq V(G)$ be a set of size at least
  $(c+1)\cdot 2^m$. Then there exists a set $S$ of size at most $c(c-1)$ and a set
  $B\subseteq A$ of size at least $m$ which is $r$-independent
  in $G-S$.
\end{theorem}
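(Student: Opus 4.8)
The plan is to fix an order $L$ witnessing $\wcol_r(G)\le c$, and use it to iteratively "halve" the set $A$ while keeping track of a small set $S$ of vertices that will have to be deleted. The key observation is the following: if $B\subseteq A$ is such that no two elements of $B$ are joined by a path of length $\le r$ whose minimum vertex lies in $B$ itself, then any path of length $\le r$ between two elements $a,b\in B$ in $G$ must pass through some vertex $w\notin B$ with $w<_L a$ and $w<_L b$; in particular $w\in\Wreach_r[G,L,a]\cap\Wreach_r[G,L,b]$. So if we can arrange that additionally all pairs in $B$ have a \emph{common} weakly reachable vertex lying below both of them, and that the collection of such "separators'' is globally bounded, we can put them all into $S$ and conclude $r$-independence in $G-S$.

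Concretely, I would proceed as follows. Start with $B_0=A$. At step $i$, consider the current set $B_i$ with $|B_i|\ge (c+1)\cdot 2^{m-i}$. For each $v\in B_i$ we have $|\Wreach_r[G,L,v]|\le c$. Define, for a pair $a,b\in B_i$, the relation that $a$ and $b$ are "linked'' if some vertex $w$ with $w\le_L a$ and $w\le_L b$ lies on a path of length $\le r$ between them; as noted above, such a $w$ satisfies $w\in\Wreach_r[G,L,a]$. The idea is to pick the $L$-maximum element $u$ of $B_i$: every other $b\in B_i$ that is at distance $\le r$ from $u$ in $G$ has a witness $w\in\Wreach_r[G,L,u]$ on the connecting path, and since $|\Wreach_r[G,L,u]\setminus\{u\}|\le c-1$... — but this only gives $c-1$ classes, not $2$. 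To get the clean factor-$2$ recursion one should instead iterate over the at most $c$ vertices of $\Wreach_r[G,L,u]$ one at a time, or, more cleanly, delete the (at most $c-1$) vertices of $\Wreach_r[G,L,u]\setminus\{u\}$, remove $u$ from $B_i$, and recurse inside the graph with these vertices deleted. Deleting $\Wreach_r$ of $u$ kills every short path from $u$; but we need to handle \emph{all} pairs, not just pairs involving $u$, so a single deletion is not enough unless we recurse. This suggests the correct recursion: after deleting $W:=\Wreach_r[G,L,u]\setminus\{u\}$ (of size $\le c-1$) and setting $B_i':=B_i\setminus\{u\}$, the element $u$ is now $r$-isolated from $B_i'$ in $G-W$; we then recurse on $(G-W, B_i', m-?)$ — but the size only drops by one, not by a factor of $2$, so this naive version needs $|A|$ roughly $m\cdot c$, not $(c+1)2^m$.

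The factor $2^m$ and the bound $c(c-1)$ on $|S|$ strongly suggest the intended argument is a \emph{Ramsey-free}, $c$-colouring/splitting argument of depth $m$ with branching into $c$ parts at each of $\le c-1$ "levels'': one should associate to each vertex $v\in A$ the vector $(\Wreach_r[G,L,v]$ arranged by the back-degree structure$)$, and at each round split $B_i$ according to which of the $\le c$ weakly-reachable vertices of the current $L$-maximum is "used'' — choosing the largest resulting part, which has size $\ge |B_i|/2$ after discarding the singleton maximum, and adding that one chosen separator vertex to $S$. Since each vertex of $A$ can contribute at most $c-1$ distinct separators over the whole process and there are $\le c$ "active'' maxima whose $\Wreach$-sets get tapped, one caps $|S|$ by $c(c-1)$. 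The main obstacle, and the part requiring genuine care, is precisely the bookkeeping that guarantees (a) the chosen part shrinks by only a factor $2$ per round so that $m$ rounds suffice starting from $(c+1)2^m$, and (b) the separators accumulated across all rounds number at most $c(c-1)$ rather than growing with $m$ — this is where the $\Wreach$-bound must be reused cleverly, observing that once a vertex $w\in\Wreach_r[G,L,v]$ has been put into $S$, it can never be selected again, and the per-vertex budget is $|\Wreach_r|-1\le c-1$. Once this combinatorial scheme is set up, verifying that the final $B\subseteq A$ of size $\ge m$ is $r$-independent in $G-S$ is immediate from the weak-reachability characterisation of short paths stated at the top of this paragraph.
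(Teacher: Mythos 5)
Your opening is on the right track (fix an order $L$ witnessing $\wcol_r(G)\le c$ and use the $L$-minimum of short paths as a weakly reachable witness), but the proposal has a genuine gap: the two mechanisms you yourself flag as ``the main obstacle'' are exactly the content of the proof, and neither is supplied. First, you never extract the intermediate set on which the iteration must run. The paper forms the auxiliary graph $H$ on $V(G)$ with $uv\in E(H)$ iff $u\in\Wreach_r[G,L,v]$ or $v\in\Wreach_r[G,L,u]$; the order $L$ certifies that $H$ is $c$-degenerate, so inside $A$ one greedily finds $I\subseteq A$ with $|I|=2^m$ independent in $H$ --- this is the only place the factor $(c+1)$ in $|A|\ge (c+1)\cdot 2^m$ is used, and your proposal never uses it. For such an $I$ the isolation claim you want really holds: deleting $\Wreach_r[G,L,v]\setminus\{v\}$ puts $v$ at distance $>r$ from all of $I\setminus\{v\}$, because the $L$-minimum $z$ of any connecting path of length $\le r$ either satisfies $z<_L v$ (hence $z$ is deleted) or $z=v$, which would make $v$ weakly $r$-reachable from the other endpoint and contradict independence in $H$. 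Starting from $B_0=A$, as you do, no such claim is available, and your ``linked'' relation does not repair this.

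Second, the halving-plus-charging scheme is not the one you sketch. In each of $m$ rounds one takes the $L$-maximum $v$ of the current set $I_i$, moves it into $B$, and distinguishes two cases: if $v$ is joined by paths of length $\le r$ in $G-S_i$ to at most half of $I_i$, one deletes those reachable vertices from $I_i$ and adds nothing to $S$; if to more than half, one adds the at most $c-1$ vertices of $\Wreach_r[G,L,v]\setminus\{v\}$ to $S$ and keeps exactly the reached vertices (minus $v$). Either way $|I_{i+1}|\ge\lfloor|I_i|/2\rfloor$, so $|I_0|=2^m$ survives $m$ rounds. The bound $|S|\le c(c-1)$ then follows from a charging argument your proposal lacks: whenever the second case fires at $v$, every surviving $u\in I_{i+1}$ was joined to $v$ by a short path in $G-S_i$ with $u<_L v$, so that path's $L$-minimum lies in $\Wreach_r[G,L,u]\cap\Wreach_r[G,L,v]$ and is deleted into $S$; hence each expensive round destroys at least one weakly $r$-reachable vertex of every survivor, and since $|\Wreach_r[G,L,u]|\le c$, the expensive case can fire at most $c$ times, each time contributing at most $c-1$ vertices to $S$. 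Your proposed accounting (``each vertex of $A$ contributes at most $c-1$ separators, and at most $c$ maxima get tapped'') points in the wrong direction: the bound comes from how many times a \emph{surviving} vertex can lose one of its weakly reachable vertices, not from how many separators each processed maximum contributes; without this, the deletions grow like $m(c-1)$, as you observe yourself, and the claimed bound $c(c-1)$ is not reached.
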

\begin{proof} Let $L\in \Pi(G)$ be such that
  $|\Wreach_r[G,L,v]|\leq c$ for every $v\in V(G)$. Let $H$ be the
  graph with vertex set $V(G)$, where we put an edge $uv\in E(H)$ if
  and only if $u\in \Wreach_r[G,L,v]$ or $v\in \Wreach_r[G,L,u]$. Then
  $L$ certifies that~$H$ is $c$-degenerate, and hence we can greedily
  find an independent set $I\subseteq A$ of size $2^m$ in $H$. By the
  definition of the graph $H$, we have that
  $\Wreach_r[G,L,v]\cap I=\{v\}$ for each $v\in I$.

\begin{claim}\label{cl:del}
  Let $v\in I$. Then deleting $\Wreach_r[G,L,v]\setminus \{v\}$ from
  $G$ leaves $v$ at a distance greater than $r$ (in
  $G-(\Wreach_r[G,L,v]\setminus\{v\})$) from all the other vertices of
  $I$.
\end{claim}
\begin{claimproof}
  Let $u\in I$ and let $P$ be a path in $G$ that has length at most
  $r$ and connects $u$ and~$v$. Let $z\in V(P)$ be minimal with
  respect to $L$. Then $z<_L v$ or $z=v$. If $z<_L v$, then
  $z\in \Wreach_r[G,L,v]$ and hence the path $P$ no longer exists
  after the deletion of $\Wreach_r[G,L,v]\setminus\{v\}$ from $G$. On
  the other hand, if $z=v$, then $v\in \Wreach_r[G,L,u]$,
  which contradicts the fact that both $u,v\in I$. 
\end{claimproof}

We iteratively find sets
$B_0\subseteq \ldots \subseteq B_m\subseteq I$, sets
$I_0\supseteq \ldots \supseteq I_m$, and sets
$S_0\subseteq \ldots \subseteq S_m$ such that $B$ is $r$-independent
in $G-S$, where $B:=B_m$ and $S:=S_m$. We maintain the invariant that
sets $B_i$, $I_i$, and $S_i$ are pairwise disjoint for each $i$.  Let
$I_0=I$, $B_0 = \emptyset$ and $S_0 = \emptyset$.  In one step
$i=1,2,\ldots, m$, we delete some vertices from $I_i$ (thus obtaining
$I_{i+1}$), shift one vertex from $I_i$ to $B_i$ (obtaining $B_{i+1}$)
and, possibly, add some vertices from $V(G)\setminus I_i$ to $S_i$
(obtaining $S_{i+1}$).  More precisely, let $v$ be the vertex of $I_i$
that is the largest in the order $L$.  We set
$B_{i+1} = B_i\cup\{v\}$, and now we discuss how $I_{i+1}$ and
$S_{i+1}$ are constructed.

We distinguish two cases. First, suppose $v$ is connected by a path of
length at most~$r$ in $G-S_i$ to at most half of the vertices of $I_i$
(including $v$). Then we remove these reachable vertices from~$I_i$,
and set $I_{i+1}$ to be the result. We also set $S_{i+1}=S_i$. Note
that $|I_{i+1}|\geq |I_i|/2$.

Second, suppose $v$ is connected by a path of length at most $r$ in
$G-S_i$ to more than half of the vertices of $I_i$ (including $v$). We
proceed in two steps. First, we add the at most $c-1$ vertices of
$\Wreach_r[G,L,v]\setminus\{v\}$ to $S_{i+1}$, that is, we let
$S_{i+1}=S_i\cup (\Wreach_r[G,L,v]\setminus\{v\})$. (Recall here that
$\Wreach_r[G,L,v]\cap I = \{v\}$.) By Claim~\ref{cl:del}, this leaves
$v$ at a distance greater than~$r$ from every other vertex of $I_i$ in
$G-S_{i+1}$. Second, we construct $I_{i+1}$ from $I_i$ by removing the
vertex $v$ and all the vertices of $I_i$ that are not connected to $v$
by a path of length at most~$r$ in $G-S_i$, hence we have
$|I_{i+1}|\geq
\lfloor|I_i|/2\rfloor$.

Observe the construction above can be carried out for $m$ steps,
because in each step, we remove at most half of the vertices of $I_i$
(rounded up) when constructing $I_{i+1}$.  As $|I_0|=|I|=2^m$, it is
easy to see that the set $I_i$ cannot become empty within $m$
iterations.  Moreover, it is clear from the construction that we end
up with a set $B=B_m$ that has size~$m$ and is $r$-scattered in $G-S$,
where $S=S_m$.  It remains to argue that $|S_m|\leq c(c-1)$. For this,
it suffices to show that the second case cannot apply more than $c$
times in total.

Suppose the second case was applied in the $i$th iteration, when
considering a vertex~$v$.  Every vertex $u\in I_i$ with $u<_Lv$ that
was connected to~$v$ by a path of length at most $r$ in $G-S_i$
satisfies $\Wreach_r[G,L,v]\cap \Wreach_r[G,L,u]\neq \emptyset$.
Thus, every remaining vertex~$u\in I_{i+1}$ has at least one of its
weakly $r$-reachable vertices deleted (that is, included in
$S_{i+1}$).  As the number of such vertices is at most $c-1$ at the
beginning, and it can only decrease during the construction, this
implies that the second case can occur at most~$c$~times.
\end{proof}

As shown in \cite{siebertz16}, if $K_k\not\minor G$, then
$\wcol_r(G)\in\Oof(r^{k-1})$. Hence, for such graphs we have to delete
only a polynomial (in $r$) number of vertices in order to find an
$r$-independent set of size $m$ in a set of vertices of size single
exponential in $m$.

We now implement the same idea to find a very simple strategy for
splitter in the splitter game, introduced by Grohe et
al.~\cite{grohe2014deciding} to characterise nowhere dense classes of
graphs.  Let~${\ell},r\in \N$. The \emph{simple $\ell$-round
  radius-$r$ splitter game} on~$G$ is played by two players,
\emph{connector} and \emph{splitter}, as follows. We let~$G_0:=G$. In
round~$i+1$ of the game, connector chooses a
vertex~$v_{i+1}\in V(G_i)$. Then splitter picks a vertex
$w_{i+1}\in N_r^{G_i}(v_{i+1})$. We
let~$G_{i+1}:=G_i[N_r^{G_i}(v_{i+1})\setminus \{w_{i+1}\}]$. Splitter
wins if~$G_{i+1}=\emptyset$. Otherwise the game continues
at~$G_{i+1}$. If splitter has not won after~${\ell}$ rounds, then
connector wins.

A \emph{strategy} for splitter is a function~$\sigma$ that maps every
partial play $(v_1, w_1, \dots,$ $v_s, w_s)$, with associated
sequence~$G_0, \dots, G_s$ of graphs, and the next
move~$v_{s+1}\in V(G_s)$ of connector, to a
vertex~$w_{s+1}\in N_r^{G_s}(v_{s+1})$ that is the next move of
splitter. A strategy~$\sigma$ is a \emph{winning strategy} for
splitter if splitter wins every play in which she follows the
strategy~$f$. We say that splitter \emph{wins} the simple $\ell$-round
radius-$r$ splitter game on~$G$ if she has a winning strategy.

\begin{theorem}[Grohe et al.\ \cite{grohe2014deciding}] A class $\CCC$
  of graphs is nowhere dense if and only if there is a function
  $\ell:\N\rightarrow\N$ such that splitter wins the simple
  $\ell(r)$-round radius-$r$ splitter game on every graph $G\in\CCC$.
\end{theorem}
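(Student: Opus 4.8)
The plan is to prove both implications, concentrating the real work in the forward direction, which I would attack through the weak colouring numbers to obtain an explicit, fast winning strategy for splitter. Fix $r$ and a graph $G$, assume $\wcol_{2r}(G)\le c$, and fix a linear order $L\in\Pi(G)$ with $|\Wreach_{2r}[G,L,v]|\le c$ for all $v$. The proposed strategy is: in round $i$, after connector plays $v_i\in V(G_{i-1})$, splitter answers with $w_i:=\min_L N_r^{G_{i-1}}(v_i)$, the $L$-smallest vertex of the current radius-$r$ ball. Since each $G_i$ is an induced subgraph of $G_{i-1}$, an easy induction gives $V(G_j)\subseteq V(G_i)\subseteq N_r^{G_{i-1}}(v_i)$ for all $j\ge i$; in particular every later connector move $v_j$ with $j\ge i$ lies in $N_r^{G_{i-1}}(v_i)$, hence at $G_{i-1}$-distance at most $r$ from $v_i$. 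Concatenating a shortest $(w_i,v_i)$-path with a shortest $(v_i,v_j)$-path — each of length at most $r$, and each therefore contained in $N_r^{G_{i-1}}(v_i)$ — yields a walk of length at most $2r$ from $w_i$ to $v_j$ staying inside that ball, hence a path of length at most $2r$ on which $w_i$ is $L$-minimal. Thus $w_i\in\Wreach_{2r}[G,L,v_j]$ for every $j\ge i$.

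The round bound then follows by a counting argument in the spirit of the uniform quasi-wideness proof above. For $k<i$ we have $w_k\notin V(G_k)\supseteq V(G_{i-1})\ni w_i$, so $w_1,\dots,w_i$ are pairwise distinct; and applying the previous paragraph with the later move $v_i$ shows that all of $w_1,\dots,w_i$ lie in $\Wreach_{2r}[G,L,v_i]$, a set of size at most $c$. Hence $i\le c$ for every round $i$ that is actually played, so splitter wins within $c$ rounds. Combined with Zhu's theorem, this shows that if $\CCC$ has bounded expansion, witnessed by $\wcol_{2r}(G)\le f(2r)$ for all $G\in\CCC$, then splitter wins the simple $f(2r)$-round radius-$r$ splitter game on every $G\in\CCC$ — a round bound depending only on $r$ — which in particular settles the ``$\Rightarrow$'' direction of the theorem for classes of bounded expansion, and does so with the small number of rounds the paper advertises.

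For general nowhere dense classes this shortcut is no longer sufficient: the only available bound is $\wcol_{2r}(G)\le|V(G)|^{o(1)}$, so the strategy above terminates only in $|V(G)|^{o(1)}$ rounds rather than in a number of rounds bounded in terms of $r$ alone, and I do not expect the colouring-number argument to recover the full nowhere dense case. To cover the whole ``$\Rightarrow$'' direction I would therefore fall back on the equivalence with uniform quasi-wideness~\cite{nevsetvril2010first} and the rank-decrease argument of Grohe et al.~\cite{grohe2014deciding}. For the converse ``$\Leftarrow$'' I would argue by contraposition: a hereditary class that is not nowhere dense is somewhere dense, so there is a fixed $r_0$ for which the class contains, for every $t$, a copy of $K_t$ with each edge subdivided at most $r_0$ times; in the game of radius at least $r_0+1$ on such a graph, connector repeatedly plays a branch vertex, and since splitter removes only one vertex per round — destroying at most one branch vertex of the subdivided clique or one of its subdivision paths — the surviving graph still contains a $(\le r_0)$-subdivided $K_{t-1}$, so connector survives arbitrarily many rounds, contradicting the existence of a uniform $\ell(r_0)$. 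The main obstacle is precisely this general nowhere dense forward direction, where the clean weak-colouring argument genuinely gives out and one must bound the number of rounds by a function of $r$ independent of $|V(G)|$; a smaller but real point of care in the bounded expansion argument is that $\Wreach_{2r}$ is computed in the ambient graph $G$ while the distances one controls are those of the shrinking graphs $G_i$, so the concatenated path has to be kept entirely inside the relevant ball.
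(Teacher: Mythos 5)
The statement you are proving is not actually proved in the paper: it is quoted from Grohe et al.~\cite{grohe2014deciding}. What the paper does prove here is the neighbouring Theorem~\ref{thm:splitterwcol}, that splitter wins the radius-$r$ game in at most $\wcol_{2r}(G)$ rounds, and your bounded-expansion argument is exactly that proof: splitter answers with the $L$-least vertex $w_i$ of the current ball, each $w_i$ lies in $\Wreach_{2r}[G,L,v_j]$ for every later move $v_j$ (via a concatenated path of length at most $2r$ kept inside $N_r^{G_{i-1}}(v_i)$, where $w_i$ is $L$-minimal), and the $w_i$ are pairwise distinct, so at most $\wcol_{2r}(G)$ rounds are possible. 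Your version is in fact more explicit than the paper's own terse counting step, and the point of care you flag (keeping the walk inside the ball so that $L$-minimality applies) is exactly right. So for the bounded-expansion special case you match the paper.

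As a proof of the stated theorem, however, there are genuine gaps. The forward direction for general nowhere dense classes is not established: you only point back to uniform quasi-wideness and ``the rank-decrease argument of Grohe et al.'', i.e.\ to the very result being proved, and you correctly observe that the $\wcol_{2r}$ strategy gives only an $|V(G)|^{o(1)}$ round bound there, which does not suffice. In the converse direction the idea (connector survives on shallow subdivisions of large cliques) is the standard and correct one, but your radius is too small: in the radius-$(r_0+1)$ game, after connector plays a branch vertex $v$ the arena shrinks to $N_{r_0+1}^{G_i}(v)$ minus splitter's vertex, and an internal vertex in the middle of the subdivision path between two \emph{other} branch vertices $a,b$ can be at distance up to $(r_0+1)+\lceil (r_0+1)/2\rceil>r_0+1$ from $v$ (e.g.\ when the graph is exactly the $r_0$-subdivided clique), so the truncation to the ball can destroy essentially every path not incident to $v$, and no $(\le r_0)$-subdivided $K_{t-1}$ need survive; splitter's deletion is not the only loss. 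The fix is easy --- play at radius at least $2(r_0+1)$, which is legitimate since for the contrapositive one bad radius suffices, and then the whole surviving subdivided clique stays inside the ball around any of its branch vertices and splitter's single deletion costs at most one branch vertex --- but as written the step fails. You should also cite explicitly the fact that a somewhere dense class contains, for some fixed $r_0$ and every $t$, a $(\le r_0)$-subdivision of $K_t$ as a subgraph of one of its members (the shallow topological minor characterisation); hereditariness plays no role in that step.
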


More precisely, it was shown in \cite{grohe2014deciding} that
$\ell(r)$ can be chosen as $N(2s(r), r)$, where $N$ and~$s$ are the
functions that characterise $\CCC$ as a uniformly quasi-wide class of
graphs. We present a proof that on bounded expansion classes, splitter
can win much faster.

\begin{theorem}\label{thm:splitterwcol} Let $G$ be a graph, let $r\in\N$ and let
  $\ell=\wcol_{2r}(G)$. Then splitter wins the $\ell$-round radius-$r$
  splitter game.
\end{theorem}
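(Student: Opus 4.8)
The plan is to exhibit a concrete winning strategy for splitter based on a fixed linear order $L$ witnessing $\wcol_{2r}(G)\le \ell$, and to argue that the quantity $|\Wreach_{2r}[G,L,v]|$ — restricted to the current game graph — strictly decreases each round, so that after at most $\ell$ rounds the game graph is empty. More precisely, I would fix once and for all an order $L\in\Pi(G)$ with $|\Wreach_{2r}[G,L,v]|\le \ell$ for all $v$. When connector plays $v_{i+1}\in V(G_i)$, splitter responds by picking $w_{i+1}$ to be the \emph{$L$-smallest} vertex of $N_r^{G_i}(v_{i+1})$; this is a legal move since $v_{i+1}\in N_r^{G_i}(v_{i+1})$ so the set is nonempty, and $w_{i+1}\le_L v_{i+1}$.

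The key invariant I would maintain is the following: for every $i$ and every $u\in V(G_i)$, the set of vertices of $V(G_i)$ that are weakly $2r$-reachable from $u$ in $G$ with respect to $L$ — that is, $\Wreach_{2r}[G,L,u]\cap V(G_i)$ — has size at most $\ell-i$. For $i=0$ this is just the choice of $L$. For the inductive step, suppose the invariant holds for $G_i$, connector plays $v_{i+1}$, and splitter deletes $w_{i+1}$, the $L$-minimum vertex of $N_r^{G_i}(v_{i+1})$, so that $G_{i+1}=G_i[N_r^{G_i}(v_{i+1})\setminus\{w_{i+1}\}]$. Take any $u\in V(G_{i+1})$. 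I want to show that $w_{i+1}\in\Wreach_{2r}[G,L,u]$, because then, since $w_{i+1}\notin V(G_{i+1})$ while $w_{i+1}$ was counted in $\Wreach_{2r}[G,L,u]\cap V(G_i)$ (it lies in $V(G_i)$, and it is weakly $2r$-reachable, hence the count drops by at least one). To see that $w_{i+1}$ is weakly $2r$-reachable from $u$: $u\in N_r^{G_i}(v_{i+1})$, so there is a path of length at most $r$ from $u$ to $v_{i+1}$ inside $G_i$, and $w_{i+1}\in N_r^{G_i}(v_{i+1})$ gives a path of length at most $r$ from $v_{i+1}$ to $w_{i+1}$ inside $G_i$; concatenating (and shortcutting to a simple path if needed) yields a walk, hence a path, of length at most $2r$ from $u$ to $w_{i+1}$, all of whose vertices lie in $V(G_i)$. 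Since $w_{i+1}$ was chosen $L$-minimum in $N_r^{G_i}(v_{i+1})$, and every vertex on this path lies in $N_r^{G_i}(v_{i+1})$ (each is within distance $r$ of $v_{i+1}$ in $G_i$ — here I should be slightly careful and, if the naive bound is $2r$ rather than $r$, instead argue directly that $w_{i+1}$ is $L$-minimum on the concatenated path because it is $L$-minimum among all vertices reachable from $v_{i+1}$ within radius $r$, and split the path at $v_{i+1}$), $w_{i+1}$ is the $L$-smallest vertex of the path, so $w_{i+1}\in\Wreach_{2r}[G,L,u]$.

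Granting the invariant, the theorem follows immediately: after $\ell$ rounds, for every $u\in V(G_\ell)$ we would have $|\Wreach_{2r}[G,L,u]\cap V(G_\ell)|\le \ell-\ell=0$, but $u\in\Wreach_{2r}[G,L,u]$ always, so $V(G_\ell)$ must already be empty — indeed the game must have ended strictly earlier, after at most $\ell$ rounds, which is what is claimed.

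The main obstacle I anticipate is the path-concatenation bookkeeping in the inductive step: making sure that the concatenation of the $u$-to-$v_{i+1}$ path and the $v_{i+1}$-to-$w_{i+1}$ path, both of length at most $r$ and living in $G_i$, genuinely certifies weak $2r$-reachability of $w_{i+1}$ from $u$ \emph{in $G$} with $w_{i+1}$ as the $L$-minimum vertex. The cleanest way to handle this is to split the argument at $v_{i+1}$: on the second segment $w_{i+1}$ is $L$-minimum by choice; on the first segment every vertex lies in $N_r^{G_i}(v_{i+1})$ and hence is $\ge_L w_{i+1}$, again by the minimality of $w_{i+1}$ in $N_r^{G_i}(v_{i+1})$. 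A minor point is to reduce the concatenated walk to a simple path without increasing its length, which is standard. One should also note $w_{i+1}\in V(G_i)$ and $w_{i+1}\notin V(G_{i+1})$ so that the cardinality genuinely decreases, and observe that all of this uses only $\wcol_{2r}$, matching the statement.
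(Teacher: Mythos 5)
Your proposal is correct and follows essentially the same route as the paper: splitter picks the $L$-minimum vertex of the current $r$-neighbourhood, and concatenating the two length-$\le r$ paths through $v_{i+1}$ (all of whose vertices stay in $N_r^{G_i}(v_{i+1})$) shows this vertex is weakly $2r$-reachable from every surviving vertex, so the game lasts at most $\wcol_{2r}(G)$ rounds. Your explicit invariant $|\Wreach_{2r}[G,L,u]\cap V(G_i)|\le \ell-i$ is just a cleaner formalisation of the counting the paper leaves implicit.
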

\begin{proof} Let $L$ be a linear order that witnesses
  $\wcol_{2r}(G)=\ell$. Suppose in round $i+1\leq \ell$, connector
  chooses a vertex $v_{i+1}\in V(G_i)$. Let $w_{i+1}$ (splitter's
  choice) be the minimum vertex of $N_r^{G_i}(v_{i+1})$ with respect
  to $L$. Then for each $u\in N_r^{G_i}(v_{i+1})$ there is a path
  between $u$ and $w_{i+1}$ of length at most $2r$ that uses only
  vertices of $N_r^{G_i}(v_{i+1})$. As $w_i$ is minimum in
  $N_r^{G_i}(v_{i+1})$, $w_{i+1}$ is weakly $2r$-reachable from each
  $u\in N_r^{G_i}(v_{i+1})$. Now let
  $G_{i+1}:=G_i[N_r^{G_i}(v_{i+1})\setminus\{w_{i+1}\}]$. As $w_{i+1}$
  is not part of $G_{i+1}$, in the next round splitter will choose
  another vertex which is weakly $2r$-reachable from every vertex of
  the remaining $r$-neighbourhood. As $\wcol_{2r}(G)= \ell$, the game
  must stop after at most $\ell$ rounds.
\end{proof}

\section{Uniform orders for graphs excluding a topological minor}\label{sec:uniform}

If $\CCC$ is a class of bounded expansion such that
$\wcol_r(G)\leq f(r)$ for all $G\in\CCC$ and all $r\in \N$, the order
$L$ that witnesses this inequality for $G$ may depend on the value
$r$. We say that a class $\CCC$ \emph{admits uniform orders} if there
is a function $f:\N\rightarrow\N$ such that for each $G\in\CCC$, there
is a linear order $L\in \Pi(G)$ such that
$|\Wreach_r[G,L,v]|\leq f(r)$ for all $v\in V(G)$ and all $r\in\N$. In
other words, there is one order that simultaneously certifies the
inequality $\wcol_r(G)\leq f(r)$ for all $r$.

It is implicit in \cite{siebertz16} that every class that excludes a
fixed minor admits uniform orders, which can be efficiently
computed. We are going to show that the same holds for classes that
exclude a fixed topological minor. Our construction is similar to the
construction of \cite{siebertz16}, in particular, our orders can be
computed quickly in a greedy fashion. The proof that we find an order
of high quality is based on the decomposition theorem for graphs with
excluded topological minors, due to Grohe and
Marx~\cite{grohe2015structure}.  Note however, that for the
construction of the order we do not have to construct a tree
decomposition according to Grohe and Marx~\cite{grohe2015structure}.

\subparagraph*{Construction.} Let $G$ be a graph. We present a
construction of an order of $V(G)$ of high quality.  We iteratively
construct a sequence $H_1,\ldots, H_\ell$ of pairwise disjoint and
connected subgraphs of $G$ such that
$\bigcup_{1\leq i\leq \ell}V(H_i)=V(G)$. For $0\leq i<\ell$, let
$G_i \coloneqq G - \bigcup_{1\le j\le i}V(H_j)$. We say that a
component $C$ of $G_i$ is \emph{connected} to a subgraph $H_j$,
$j\leq i$, if there is a vertex $u\in V(H_j)$ and a vertex $v\in V(C)$
such that $uv\in E(G)$. For all~$i$, $1\leq i<\ell$, we will maintain
the following invariant. If $C$ is a component of $G_i$, then the
subgraphs $H_{i_1},\ldots, H_{i_s}\in \{H_1,\ldots, H_i\}$ that are
connected to $C$ form a minor model of the complete graph $K_s$, where
$s$ is their number.

To start, we choose an arbitrary vertex $v\in V(G)$ and let $H_1$ be
the connected subgraph $G[\{v\}]$. Clearly, $H_1$ satisfies the above
invariant.  Now assume that for some $i$, $1\le i< \ell$, the sequence
$H_1,\ldots,H_i$ has already been constructed. Fix some component $C$
of $G_i$ and, by the invariant, assume that the subgraphs
$H_{i_1},\ldots, H_{i_s} \in \{H_1,\ldots, H_i\}$ with
$1\leq i_1<\ldots<i_s\leq i$ that have a connection to $C$ form a
minor model of~$K_s$. For a vertex $v\in V(C)$, let $m(v)$ be the
maximum cardinality of a family $\mathcal{P}$ of paths with the
following properties: each path of~$\mathcal{P}$ connects $v$ with a
different subgraph $H_{i_j}$, the internal vertices of each path 
from~$\mathcal{P}$ belong to $G_i$, and the paths of~$\mathcal{P}$ are
pairwise disjoint apart from sharing $v$. Note that $m(v)$ can be
computed in polynomial time using any maximum flow algorithm.
Pick~$v$ to be a vertex of $C$ with maximum $m(v)$.  Let $T$ be the
tree of the breadth-first search in $G[C]$ that starts in $v$; thus,
$T$ is rooted at~$v$.  We choose $H_{i+1}$ to be a minimal connected
subtree of $T$ that contains $v$ and, for each~$j$ with
$1\leq j\leq s$, at least one neighbour of $H_{i_j}$ in $C$.

From the construction it is easy to see that for every component $C'$
of $G_{i+1}$, the subgraphs
$H'_{i_1},\ldots, H'_{i_{s'}}\in \{H_1,\ldots, H_{i+1}\}$ that are
connected to $C'$ form the minor model of a complete graph, hence the
invariant is again established.  Having chosen $H_{i+1}$, we proceed
to the next iteration. The construction stops when all vertices are
part of some $H_i$, $1\leq i\leq \ell$.

We construct an order $L$ of $V(G)$ as follows. Let $v<_Lu$ if
$v\in V(H_i)$ and $u\in V(H_j)$ for some $i<j$. Furthermore, we order
the vertices within each $H_i$ arbitrarily.  Obviously, the
construction does not depend on~$r$, hence the produced order is
uniform for~$G$.

\subparagraph*{Analysis.} From now on we assume that $G$ excludes $K_k$
as a topological minor, for some constant~$k$. Furthermore, assume
that the graphs $H_1,\ldots, H_\ell$ and a corresponding order $L$
have been constructed, as described above. We now show that the
constructed order has good qualities. Our proof is based on the
following two key lemmas. The first lemma states that for every
component $C$ of $G_i$ arising after the construction of
$H_1,\ldots, H_i$, every vertex $v$ of $C$ can reach only a bounded
number of subgraphs among $H_1,\ldots, H_i$ by disjoint paths.

\begin{lemma}\label{lem:disjointpaths}
  There is a constant $\alpha$ (depending only on $k$) such that for
  all integers $i$, $1\leq i<\ell$, if $C$ is a component of $G_i$,
  then for every vertex $v\in V(C)$, we have $m(v)\leq \alpha$, where
  $m(v)$ is defined as in the construction.
\end{lemma}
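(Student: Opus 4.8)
The plan is to argue by contradiction: if $m(v)$ were large for some vertex $v$ in a component $C$ of $G_i$, then we would be able to assemble a topological model of $K_k$ inside $G$, contradicting the assumption that $G$ excludes $K_k$ as a topological minor. The starting observation is that, by the invariant maintained in the construction, the subgraphs $H_{i_1},\ldots,H_{i_s}$ connected to $C$ form a minor model of $K_s$ in $G - V(C)$ (in fact in $G_i$ together with earlier-deleted parts). More is true, and this is what I would exploit: each $H_j$ for $j \le i$ is a tree (a subtree of a BFS tree), so each $H_{i_t}$ has bounded degree only in a weak sense — but what matters is that a minor model built from trees can be converted into a \emph{topological} minor model after bounding the number of branch vertices of degree $\ge 3$. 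Concretely, if the subgraphs $H_{i_1},\ldots,H_{i_s}$ form a $K_s$-minor-model and each $H_{i_t}$ is a tree, then by suppressing degree-$2$ vertices and using the tree structure one gets that $H_{i_1},\ldots,H_{i_s}$ together contain a subdivision of a graph on $s$ ``hubs'' that is close to $K_s$; the precise statement to invoke here is a standard fact that a tree-based $K_s$-minor model yields a $K_{s'}$-topological-minor model with $s' \ge \Omega(\sqrt{s})$ (because each tree branch set can serve as at most a bounded number of branch \emph{vertices} after accounting for the internal branching).

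Here is the more careful route I would actually take. Suppose $m(v) = m$, witnessed by internally disjoint paths $P_1,\ldots,P_m$ from $v$ to distinct subgraphs $H_{j_1},\ldots,H_{j_m}$ (relabelling, these are among $H_{i_1},\ldots,H_{i_s}$), with all internal vertices of the $P_t$ lying in $G_i$ — hence in $C$, since they start at $v \in C$ and avoid the deleted subgraphs except at their endpoints. Now I add $v$ as a new branch set $H_0$; the paths $P_1,\ldots,P_m$ certify that $H_0$ is adjacent (via disjoint connectors) to each of $H_{j_1},\ldots,H_{j_m}$, and these $m$ subgraphs already form a $K_m$-minor-model among themselves by the invariant. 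So $\{H_0, H_{j_1},\ldots,H_{j_m}\}$ is a minor model of $K_{m+1}$ in which the edges incident to $H_0$ are realised by internally disjoint paths. It remains to turn this into a topological model. I would now invoke that each $H_{j_t}$ is a tree; within each such tree there are at most $\binom{m}{2}$ ``terminals'' (the endpoints, inside $H_{j_t}$, of the connector paths between $H_{j_t}$ and the other branch sets, plus the path $P_t$), and a tree with $q$ terminals contains at most $q - 1$ vertices of degree $\ge 3$ when restricted to the minimal subtree spanning the terminals. A graph excluding $K_k$ as a topological minor has bounded average degree (indeed $O(k^2)$), hence bounded degeneracy, hence $O(k^2 \cdot n)$ edges; but the cleanest finish is: a $K_t$-topological-minor needs $t$ vertices of degree $\ge t-1$ in the host, so once we extract from the above a genuine $K_t$-subdivision for $t$ growing with $m$, taking $m$ larger than some $\alpha = \alpha(k)$ forces $t \ge k$, a contradiction.

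The main obstacle — and the step I would spend the most care on — is precisely the minor-to-topological-minor conversion: a $K_{m+1}$-minor does \emph{not} in general give a $K_{m+1}$-topological-minor, only a $K_{\Omega(\sqrt m)}$ one in the worst case, and here we must additionally keep the connector paths for $H_0$ internally disjoint from the re-routed tree paths. I would handle this by using the tree structure of the $H_{j_t}$ crucially: in a tree, any family of pairwise paths between prescribed terminal sets can be routed so that the total branching is controlled, so the ``loss'' in passing to a topological minor is only a function of $k$ (since $s \le \wcol$-type bounds or, more directly, since we may assume $s < $ some function of $k$ — indeed if $s$ itself were $\ge k$ we would already be done, as $K_s$-minor with tree branch sets of bounded complexity gives $K_k$-topological-minor). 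So the argument splits: either $s$ is already large enough to finish, or $s$ is bounded by a function of $k$, in which case each tree $H_{j_t}$ has only $O_k(1)$ branch vertices, the re-routing loses only an $O_k(1)$ factor, and choosing $\alpha$ large enough in terms of $k$ yields the desired $K_k$-subdivision and the contradiction. Setting $\alpha := \alpha(k)$ to be this threshold completes the proof.
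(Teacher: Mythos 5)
There is a genuine gap, and it is located exactly at the step you yourself flag as the delicate one: the minor-to-topological-minor conversion does not work, and no amount of care can make it work, because the facts you assemble are simply not contradictory with $K_k\not\minor^t G$. A $K_s$-minor model with tree branch sets, plus a vertex $v$ sending $m$ internally disjoint paths to the branch sets, can live inside a graph that excludes even $K_6$ as a topological minor: subdivide/expand the vertices of $K_n$ into paths so that the host graph is subcubic (this realises a $K_n$-minor with tree branch sets but contains no $K_5$-subdivision, since a $K_5$-subdivision needs five vertices of degree at least $4$), and then add a single apex vertex $v$ adjacent to one vertex of each branch set, so that $m(v)=n$ while the graph still has only one vertex of degree exceeding $4$ and hence no $K_6$-subdivision. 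This also shows that your quoted ``standard fact'' is false: a tree-based $K_s$-minor model does not yield a $K_{\Omega(\sqrt{s})}$-topological-minor model (the $\sqrt{\cdot}$ phenomenon concerns average degree forcing subdivisions, not minors forcing subdivisions), and your fallback dichotomy (``either $s\geq k$ and we are done, or $s=O_k(1)$'') collapses for the same reason --- large $s$ alone finishes nothing. The fundamental obstruction is that a $K_k$-subdivision requires $k$ vertices of degree at least $k-1$, and your configuration supplies only one candidate high-degree vertex, namely $v$.

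Consequently the lemma cannot be proved from the invariant and the path family at $v$ alone; the paper's proof uses two ingredients that are absent from your proposal. First, it exploits the greedy rule that each root was chosen to maximise $m(\cdot)$ in its component: if some $v_1$ has $m(v_1)>\alpha$ now, then walking back through the history of the construction one finds that the roots $v_2,v_3,\ldots$ chosen earlier also had $m$-value at least $\alpha-1$ at the time they were picked, so one obtains \emph{many} vertices with large $m$, not just one. Second, it invokes the Grohe--Marx decomposition (Theorem~\ref{thm:decomposition}): via Lemmas~\ref{lem:corebag}--\ref{lem:connectionsfromoutside}, each such vertex must lie in the bag of the unique core node, that node must be a bounded-degree node, and each of these vertices must be an apex vertex of its torso; since there are at most $c(k)$ apex vertices, taking $\alpha=a(k)+c(k)+d(k)+e(k)$ gives the contradiction. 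So the missing ideas are precisely the use of the maximality of $m$ at earlier stages and the structure theorem that bounds the number of possible ``hubs''; also note that you may not use Lemma~\ref{lem:degbound} (bounded degree of the trees $H_i$) in this argument, as it is itself a corollary of Lemma~\ref{lem:disjointpaths}.
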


The second lemma states that from a vertex of $H_{i+1}$, we can reach
only a bounded number of vertices of each $H_j$, $1\leq j\leq i+1$, by
short disjoint paths in $G_i$.

\begin{lemma}\label{lem:reachablevertices}
  There is a constant $\beta$ (depending only on $k$) such that for
  all integers $i,j$, where $1\leq j\leq i\leq\ell$, and all positive
  integers~$r$, the following holds.  Suppose $v\in V(H_i)$, and let
  $\mathcal{P}$ be any family of paths of length at most $r$ with the
  following properties: each path from~$\mathcal{P}$ connects~$v$ with
  a different vertex of $H_j$, the internal vertices of $\mathcal{P}$
  belong to $G_j$, and paths from $\mathcal{P}$ are internally vertex
  disjoint.  Then $\mathcal{P}$ has size not larger than
  $\beta\cdot r$.
\end{lemma}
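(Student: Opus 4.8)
The statement bounds, for a vertex $v \in V(H_i)$, the number of internally-disjoint paths of length $\le r$ in $G_j$ from $v$ to distinct vertices of $H_j$. The natural approach is to suppose, for contradiction, that there are more than $\beta r$ such paths (for a suitable $\beta$ to be fixed) and to use them to build a topological minor model of $K_k$ in $G$, contradicting the assumption. The key structural fact we have available is the invariant: the subgraphs among $H_1,\dots,H_j$ that are connected to the component $C'$ of $G_j$ containing $v$ (and in particular $H_j$ itself, together with $H_i$ if $i \ge j$; one has to be slightly careful when $i = j$, but then the statement is trivial since the paths have no internal vertices in $G_j$ and $v$ has only $\deg v$ neighbours — actually we should just handle $i > j$) form a minor model of a complete graph. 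So $H_1, \dots, H_j$ already give us a large clique minor; the extra ingredient coming from the many paths is what upgrades this to a \emph{topological} minor, which is what $K_k$-topological-minor-freeness forbids.

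**Key steps, in order.** First I would reduce to the case $i > j$, noting $v \in V(H_i) \subseteq V(G_j)$, so $v$ lies in some component $C'$ of $G_j$; let $H_{j_1}, \dots, H_{j_s}$ with $j \in \{j_1, \dots, j_s\}$ be the subgraphs connected to $C'$, which by the invariant form a minor model of $K_s$. If $s \ge k$ we would already (almost) have a $K_k$-minor, but we need topological, so the real point is: the paths in $\mathcal{P}$, which live in $G_j$ and hence are either inside $C'$ or can be rerouted to stay in $C'$ (their internal vertices are in $G_j$ and one endpoint is $v \in C'$, the other is in $H_j$ which is connected to $C'$ — actually the paths connect $v$ to vertices of $H_j$, and since $H_j$ is connected we can treat $H_j$ as contracted). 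The crucial combinatorial move is a counting/pigeonhole argument of the flavour used for such constructions (cf. the standard fact that a graph with many disjoint paths between two connected sets, within a sparse graph, contains a large topological clique): many internally-disjoint $v$–$H_j$ paths, together with the connectivity afforded by the branch sets $H_{j_1}, \dots, H_{j_s}$ and by $H_j$ and by the BFS-tree structure of $H_i$ (recall $H_i$ is a BFS-subtree, so it has small radius, which controls distances and lets us bound how the paths interact), force a subdivided $K_k$. Here the bound ``$\le r$'' on path lengths and the radius bound on $H_i$ are what make $\beta$ depend only on $k$: longer allowed paths would let $v$ reach proportionally more vertices of $H_j$, which is exactly why the bound is linear in $r$ and not constant.

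**The main obstacle.** The hard part is the core extremal argument: showing that ``too many short internally-disjoint paths from $v$ into $H_j$'' genuinely yields a $K_k$ topological minor, with a threshold $\beta r$ that is linear in $r$ and with $\beta$ depending on $k$ alone. This requires carefully marshalling (i) the clique-minor structure among $H_1, \dots, H_j$ from the invariant, (ii) the fact that $H_i$ was chosen as a \emph{minimal} BFS-subtree connecting to all relevant earlier subgraphs — hence has bounded radius, which is what forces the many paths to funnel through a bounded-size set and thereby realize branch vertices of a subdivision — and (iii) a pigeonhole step to extract, from $\Omega(r)$ paths, a constant number that are ``parallel enough'' to serve as the subdivision paths. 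I expect that one argues: if $m(v) \le \alpha$ (Lemma~\ref{lem:disjointpaths}) then the number of earlier subgraphs $v$ connects to is bounded, so the many paths in $\mathcal{P}$ must hit few subgraphs $H_j$ but many \emph{vertices} within them; combined with $H_j$'s internal structure and a Menger-type argument this overpacks a bounded-radius region and produces the forbidden topological $K_k$. Routing the paths to avoid accidental intersections, and getting the constant $\beta$ to be clean, will be the fiddly bookkeeping; the conceptual skeleton is the interplay of the complete-minor invariant with the BFS-radius control.
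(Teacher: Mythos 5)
There is a genuine gap: your sketch never identifies the mechanism that actually yields the bound, and the ingredients you lean on are not the right ones. The paper's proof does \emph{not} build a topological $K_k$ at all; it reduces everything to Lemma~\ref{lem:disjointpaths}. The two facts you are missing are these. First, $H_j$ is a subtree of a breadth-first search tree of (a component of) $G_{j-1}$, so every root-to-vertex path of $H_j$ is an \emph{isometric} path in $G_{j-1}$; consequently a ball of radius $r$ around $v$ in $G_{j-1}$ meets any single root-to-leaf path of $H_j$ in at most $2r+1$ vertices. This, and not any ``radius bound on $H_i$'', is the source of the linear-in-$r$ factor --- indeed $H_i$ has no bounded radius: it is a subtree of a BFS tree and can be an arbitrarily long path, so the step in your outline that uses ``$H_i$ has small radius, which controls distances'' fails. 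Second, by the minimality of $H_j$ in its construction, each leaf of $H_j$ attaches to a \emph{different} earlier branch set $H_{j'}$, $j'<j$. Now the argument is a two-line pigeonhole: if $|\mathcal{P}|>(2r+1)\alpha$, then the endpoints of $\mathcal{P}$ in $H_j$ hit more than $\alpha$ distinct root-to-leaf paths, and extending those paths downward inside $H_j$ (pairwise disjointly, since the chosen endpoints lie on distinct root-to-leaf paths) produces more than $\alpha$ disjoint paths in $G_{j-1}$ from $v$ to distinct earlier branch sets, contradicting $m(v)\le\alpha$ from Lemma~\ref{lem:disjointpaths}. So one sets $\beta$ with $\beta\cdot r\ge(2r+1)\alpha$.

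Your proposed route --- upgrading the clique-minor invariant plus the many paths to a subdivided $K_k$ --- is not only unnecessary but unsupported: many internally disjoint short paths from a single vertex $v$ into one connected set $H_j$ do not by themselves give the many high-degree branch vertices joined by disjoint paths that a topological clique requires, and you give no mechanism for producing them; all the topological-minor-freeness is already spent inside Lemma~\ref{lem:disjointpaths} (via the Grohe--Marx decomposition), which is exactly why this lemma can afford to be purely combinatorial. Two smaller errors: the case $i=j$ is not trivial (the paths may well have internal vertices in $G_j$; $v\in V(H_j)$ merely means $v\notin V(G_j)$), and no case distinction is needed anyway, since for every $i\ge j$ the vertex $v$ lies in $G_{j-1}$ and the whole argument takes place there.
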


It is easy to show that the above two lemmas guarantee that $L$ has
the required properties.  

\begin{corollary}\label{cor:wrapup}
  If $K_k\not\minor^tG$, then there exists a constant $\gamma$
  (depending only on $k$) and a uniform order $L$ that witnesses
  $\adm_r(G)\leq \gamma\cdot r$ for all non-negative integers $r$.
\end{corollary}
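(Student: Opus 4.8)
The plan is to derive Corollary~\ref{cor:wrapup} by combining Lemmas~\ref{lem:disjointpaths} and~\ref{lem:reachablevertices} in a straightforward counting argument, using the order $L$ produced by the construction. Fix a non-negative integer $r$ and a vertex $v\in V(G)$; say $v\in V(H_i)$. We must bound $\adm_r[G,L,v]$, i.e.\ the maximum size of a family $\mathcal{Q}$ of paths of length at most $r$, pairwise disjoint apart from $v$, each starting at $v$ and ending at some vertex $w\le_L v$, with all internal vertices $>_L v$. Since the internal vertices of each such path are larger than $v$ in $L$, and $v\in V(H_i)$ while vertices of $H_1,\dots,H_{i-1}$ are all smaller than $v$, every internal vertex lies in $\bigcup_{j\ge i} V(H_j)$; in particular no internal vertex lies in any $H_j$ with $j<i$, so each path stays inside $G_{i-1}$ except possibly for its endpoint. (One should be slightly careful when the endpoint lies in $H_i$ itself; I would treat endpoints in $H_i$ separately.)

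First I would handle the endpoints that fall in $H_i$. There can be at most $|V(H_i)|$ of these, but that is not an absolute bound, so instead I observe that such a path, together with the connectivity of $H_i$, does not immediately help; the cleaner route is to note that all of $V(H_i)$ lies in the same ``block'' of $L$ (ordered arbitrarily), so a path of $\mathcal{Q}$ ending in $H_i$ can be rerouted or simply counted against a bound coming from how $v$ sits in $H_i$. Concretely, since $H_i$ is a tree of bounded ``reach'' by Lemma~\ref{lem:reachablevertices} applied with $j=i$, the number of vertices of $H_i$ reachable from $v$ by internally-disjoint short paths inside $G_{i-1}$ is at most $\beta\cdot r$, so at most $\beta\cdot r$ paths of $\mathcal{Q}$ end inside $H_i$.

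Next I would handle paths of $\mathcal{Q}$ whose endpoint $w$ lies in some $H_j$ with $j<i$. Each such path, after removing $v$, is a path in $G_{i-1}$ from a neighbour of $v$ to $H_j$ with internal vertices in $G_{i-1}$; since the paths in $\mathcal{Q}$ are disjoint apart from $v$, these induce a family of internally-disjoint paths from $v$ to the subgraphs $H_j$ ($j<i$) witnessing a value at least (number of distinct $H_j$'s hit). Restricting to the component $C$ of $G_{i-1}$ containing $v$ — wait, one must be careful: $v\in V(H_i)$, not $V(G_{i-1})$, but $v$ has neighbours in some component $C$ of $G_{i-1}$, and the subgraphs $H_j$ connected to $C$ form a minor model of a clique by the invariant, so Lemma~\ref{lem:disjointpaths} bounds by $\alpha$ the number of distinct such $H_j$ reachable from $v$ via internally-disjoint paths with internal vertices in $G_{i-1}$. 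Hence at most $\alpha$ distinct subgraphs $H_j$ ($j<i$) are hit. For each fixed such $H_j$, the paths of $\mathcal{Q}$ ending in $H_j$ are internally disjoint, have length at most $r$, and have internal vertices in $G_j\supseteq G_{i-1}$, so Lemma~\ref{lem:reachablevertices} gives at most $\beta\cdot r$ of them. Therefore at most $\alpha\cdot\beta\cdot r$ paths of $\mathcal{Q}$ end outside $H_i$, and combining with the previous paragraph we get $\adm_r[G,L,v]\le \alpha\beta r+\beta r+1=(\alpha\beta+\beta)r+1$, which is $\Oof(r)$; setting $\gamma$ to be, say, $\alpha\beta+\beta+1$ finishes the proof. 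Since the construction of $L$ does not depend on $r$, the same order works for all $r$, so $L$ is uniform.

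The main obstacle I anticipate is bookkeeping the exact region in which the internal vertices of the admissibility-witnessing paths live, so that Lemmas~\ref{lem:disjointpaths} and~\ref{lem:reachablevertices} can be applied with the correct index: the lemmas require internal vertices in $G_i$ (respectively $G_j$), while the order $L$ only tells us internal vertices are $>_L v$, i.e.\ not in $H_1,\dots,H_{i-1}$. Reconciling ``$v\in H_i$, internal vertices avoid $H_{<i}$'' with ``paths live in $G_{i-1}$, whose components are governed by the clique-minor invariant'' — in particular dealing with the degenerate interaction between $v$ and the rest of $H_i$ — is the one spot requiring genuine care; everything else is a two-line multiplication of the two constants.
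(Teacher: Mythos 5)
Your argument is essentially the paper's own proof: apply Lemma~\ref{lem:disjointpaths} (to the component of $G_{i-1}$ relevant when $H_i$ was built) to bound by $\alpha$ the number of distinct earlier subgraphs $H_j$ hit, apply Lemma~\ref{lem:reachablevertices} to bound by $\beta\cdot r$ the number of paths ending in each such $H_j$ (including $j=i$), and multiply, giving $\gamma$ on the order of $(\alpha+1)\beta$. The only blemish is the aside claiming $v\notin V(G_{i-1})$ — in fact $H_i\subseteq G_{i-1}$, so $v$ lies in a component of $G_{i-1}$ and Lemma~\ref{lem:disjointpaths} applies directly — but this does not affect the correctness of the bookkeeping, which matches the paper's.
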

\begin{proof}
  The $r$-admissibility of a vertex $v$ is determined in the construction when 
  $v$ first appears in some $H_i$, $1\leq i<\ell$. More precisely, $\adm_r(G)$ is 
  upper bounded by the maximum possible number of disjoint paths in $G_i$ of 
  length at most $r$ from $v$ to 
  vertices of $H_1\cup \ldots\cup H_i$. By Lemma~\ref{lem:disjointpaths},
  there is a constant $\alpha$ such that $v$ can reach at most $\alpha$ distinct
  $H_j$, $1\leq j< i$, via internally vertex disjoint paths in $G_i$. Additionally, $v$ can 
  reach other vertices of $H_i$. By Lemma~\ref{lem:reachablevertices}, there is a constant $\beta$ such that $v$ 
  reach at most $\beta\cdot r$ vertices of each $H_j$ with $j\leq i$ by internally
  vertex disjoint paths of length at most~$r$. Let
  $\gamma\coloneqq (\alpha+1)\cdot \beta$, then $v$ can reach at most $\gamma\cdot r$ smaller
  vertices by internally vertex disjoint paths of length at most $r$ whose
  internal vertices are larger than~$v$ w.r.t.\ $L$.
\end{proof}

The proof of Lemma \ref{lem:disjointpaths} is based on the
decomposition theorem for graphs with excluded topological minors of
Grohe and Marx \cite{grohe2015structure}. Recall that a \emph{tree
  decomposition} of a graph~$G$ is a pair $(T,\beta)$, where $T$ is a
tree and $\beta:V(T)\rightarrow 2^{V(G)}$, such that for every vertex
$v\in V(G)$ the set $\beta^{-1}(v)=\{t\in V(T) : v\in \beta(t)\}$ is
non-empty and connected in $T$, and for every edge $e\in E(G)$ there
is a node $t\in V(T)$ such that $e\subseteq \beta(t)$. The
\emph{width} of $(T,\beta)$ is $\max\{|\beta(t)|-1 : t\in V(T)\}$ and
the \emph{adhesion} of $(T,\beta)$ is
$\max\{|\beta(s)\cap \beta(t)| : st\in E(T)\}$.

For a node $t\in T$, we call $\beta(t)$ the \emph{bag at $t$}. If
$T'\subseteq T$, we write $\beta(T')$ for
$\bigcup_{t'\in V(T')}\beta(t')$ and if $M\subseteq V(G)$, we write
$\beta^{-1}(M)$ for $\bigcup_{v\in M}\beta^{-1}(v)$. Denote by $K[X]$
the complete graph on a vertex set $X$. The \emph{torso at $t$} is the
graph
$\tau(t):=G[\beta(t)]\cup \bigcup_{st\in E(T)}K[\beta(s)\cap
\beta(t)]$.

\begin{theorem}[\cite{grohe2015structure}]\label{thm:decomposition}
  For every $k\in\N$, there exist constants $a(k),c(k),d(k)$ and
  $e(k)$ such that the following holds. Let $H$ be a graph on $k$
  vertices. Then for every graph $G$ with $H\not\minor^tG$ there is a
  tree decomposition $(T,\beta)$ of adhesion at most $a(k)$ such that
  for all $t\in V(T)$ one of the following two alternatives hold.
  \smallskip
\begin{enumerate}
\item The torso $\tau(t)$ has at most $c(k)$ vertices of degree larger
  than $d(k)$, which we call the \emph{apex vertices} of
  $\tau(t)$. Such a node $t$ will be called a \emph{bounded degree
    node}.
\item The torso $\tau(t)$ excludes the complete graph $K_{e(k)}$ as a
  minor. Such a node $t$ will be called an \emph{excluded minor node}.
\end{enumerate}
\end{theorem}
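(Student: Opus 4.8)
Theorem~\ref{thm:decomposition} is the structure theorem of Grohe and Marx~\cite{grohe2015structure}, and reproving it is a substantial undertaking; the strategy I would follow has two layers. In a \emph{decomposition} layer I would recursively split $G$ along separators of bounded size to obtain a tree decomposition of bounded adhesion, and in a \emph{classification} layer I would show that every piece that can no longer be usefully split is of one of the two prescribed types. Throughout I would use the classical fact (Bollob\'as--Thomason, Koml\'os--Szemer\'edi) that a graph with no $K_k$ topological minor has average degree $O(k^2)$ and is therefore $O(k^2)$-degenerate: this keeps all intermediate graphs sparse and guarantees that the recursion terminates.

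For the decomposition I would fix a threshold $a(k)$ and, as long as the current piece admits a separation $(X,Y)$ with $|X\cap Y|\le a(k)$ that is ``nontrivial'', split along it: recurse on $G[X]$ and $G[Y]$ with $X\cap Y$ turned into a clique on each side, and record $X\cap Y$ as an adhesion set of the tree decomposition being built. Sparsity makes this halt, producing $(T,\beta)$ of adhesion at most $a(k)$. The first delicate point is that every torso $\tau(t)$ must still exclude $K_k$ (or $K_{k'}$ for a $k'$ controlled by $k$) as a topological minor: a topological model in $\tau(t)$ would have to be pulled back to $G$ by replacing each clique edge that lies inside an adhesion set by a path through the far side of that separator, and sufficiently many of those paths must be internally disjoint. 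I would therefore tune the meaning of ``nontrivial'' so that a separation is left undeleted only when its far side realises the whole clique on the adhesion set as a topological subgraph, making the pull-back always possible; hence no torso acquires a topological $K_k$ that $G$ does not have.

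For the classification, consider a torso $\tau$ admitting no nontrivial separation of order $\le a(k)$, so that $\tau$ is ``sufficiently connected'' relative to $a(k)$. If $\tau$ has no $K_{e(k)}$ minor, then $t$ is an excluded minor node and we are done; so assume $\tau$ has a $K_{e(k)}$ minor. The core combinatorial lemma I would establish is that a sufficiently connected graph with a large clique minor but no $K_k$ topological minor has at most $c(k)$ vertices of degree larger than $d(k)$ --- which is exactly alternative~1. The idea: if more than $c(k)$ vertices had degree exceeding $d(k)$, then, using the connectivity of $\tau$ together with the abundant routing capacity supplied by the $K_{e(k)}$ minor, one could pick $k$ of these high-degree vertices and link them pairwise by internally disjoint paths (a linkage/Menger argument, with $a(k)$ chosen large enough that $\binom{k}{2}$ disjoint paths can always be routed), obtaining a $K_k$ topological minor in $\tau$ and hence in $G$ --- a contradiction. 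So $t$ is a bounded-degree node, and the classification is complete.

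The main obstacle will be the two quantitative interfaces. First, one must choose ``nontrivial separation'' so that the recursion simultaneously terminates with adhesion bounded by a function of $k$ and leaves every surviving torso topological-$K_k$-free in the controlled sense; this forces a careful analysis of how topological models survive being pulled back across an adhesion set, and it is here that $a(k)$ is really pinned down. Second, the core lemma's linkage step --- converting ``many high-degree vertices together with a large clique minor in a sufficiently connected graph'' into an honest $K_k$-subdivision with all of its internally disjoint branch paths --- needs the full strength of the Bollob\'as--Thomason linkage machinery and the degeneracy bound, and it is here that $c(k)$, $d(k)$, $e(k)$ are fixed. Carrying out both parts with explicit constants is precisely the content of~\cite{grohe2015structure}.
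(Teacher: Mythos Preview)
The paper does not prove Theorem~\ref{thm:decomposition} at all: it is quoted verbatim from Grohe and Marx~\cite{grohe2015structure} and used as a black box in the analysis of Section~\ref{sec:uniform}. So there is no ``paper's own proof'' to compare your proposal against; you correctly recognised this yourself in your opening sentence.

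As for the sketch itself, it is a plausible high-level caricature of the Grohe--Marx strategy --- build a tree decomposition by recursively separating, then argue that each unseparable torso is either almost bounded-degree or excludes a fixed clique minor --- and the ``core combinatorial lemma'' you isolate (high connectivity plus a large clique minor plus many high-degree vertices forces a $K_k$-subdivision) is indeed the heart of the matter. But the real work in~\cite{grohe2015structure} lies precisely in the two places you flag as obstacles, and your sketch does not get past the level of ``one would use a linkage/Menger argument''. In particular, the step that turns many high-degree vertices into branch vertices of a subdivision requires substantially more than Bollob\'as--Thomason linkage: one needs control over where the linking paths go so that they avoid each other's branch vertices, and Grohe and Marx handle this through a careful tangle-based argument rather than a direct Menger application. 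Your outline would not, as written, close that gap. Since the present paper only \emph{uses} the theorem, this is not a defect for the purposes of this paper; but as a standalone proof attempt it remains a sketch rather than a proof.
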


We will need the following well-known properties of trees and tree
decompositions.

\begin{lemma}[Helly-property for trees]\label{lem:helly}
  Let $T$ be a tree and let $(T_i)_{i\in I}$ be a family of subtrees
  of $T$. If $V(T_i) \cap V(T_j)\not=\emptyset$, for all $i,j\in I$,
  then $\bigcap_{i\in I} V(T_i) \not= \emptyset$.
\end{lemma}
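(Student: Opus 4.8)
The plan is to prove this by induction on $|I|$, or—more slickly—to reduce it to a statement about convexity of subtrees. Let me describe both approaches; I would likely write up the cleaner of the two.

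First I would recall the basic fact that subtrees of a tree are precisely the ``convex'' connected subsets: if $T_i$ and $T_j$ are subtrees and $V(T_i)\cap V(T_j)\neq\emptyset$, then $V(T_i)\cap V(T_j)$ is again a (nonempty, connected) subtree, since the intersection of two connected subsets of a tree that meet is connected—any two vertices in the intersection are joined by the unique path of $T$, which lies in both $T_i$ and $T_j$ by their connectedness. With this observation in hand, the natural route is induction on the number of subtrees. For $|I|\le 2$ the claim is the hypothesis itself. For the inductive step, suppose the claim holds for every family of fewer than $n$ subtrees, and let $(T_i)_{i\in I}$ be a pairwise-intersecting family with $|I|=n$. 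Pick three distinct indices $i_1,i_2,i_3$ and consider the three pairwise intersections $T_{i_1}\cap T_{i_2}$, $T_{i_2}\cap T_{i_3}$, $T_{i_1}\cap T_{i_3}$; the crux is to show these three subtrees have a common vertex, which gives a single vertex in $T_{i_1}\cap T_{i_2}\cap T_{i_3}$, and then one replaces the three subtrees $T_{i_1},T_{i_2},T_{i_3}$ by their common intersection, obtaining a pairwise-intersecting family of $n-2$ subtrees (one needs to check the new family is still pairwise intersecting, which follows because any $T_j$ meets each of $T_{i_1},T_{i_2},T_{i_3}$—but that alone does not immediately give that $T_j$ meets $T_{i_1}\cap T_{i_2}\cap T_{i_3}$, so care is needed here).

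The main obstacle, and the genuine content of the lemma, is the three-subtree case: showing that if $A,B,C$ are subtrees of $T$ pairwise intersecting, then $A\cap B\cap C\neq\emptyset$. The clean way to handle this is via the \emph{median} of a tree: for any three vertices $x,y,z$ of a tree $T$, the three paths $xy$, $yz$, $xz$ share a unique common vertex $\mu(x,y,z)$, the median. Now pick $a\in A\cap B$, $b\in B\cap C$, $c\in A\cap C$, and set $\mu=\mu(a,b,c)$. Since the $a$--$b$ path lies in $B$ (both $a,b\in B$, and $B$ is connected), $\mu$ lies on that path, so $\mu\in B$; likewise the $b$--$c$ path lies in $C$ so $\mu\in C$, and the $a$--$c$ path lies in $A$ so $\mu\in A$. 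Hence $\mu\in A\cap B\cap C$. Because of this direct argument, I would actually prefer to sidestep the clumsy ``replace three by one'' induction and instead argue the general case directly: I would reduce to the finite case (if $I$ is infinite, take any two indices and note their intersection $S$ is a finite nonempty subtree; by compactness/finiteness it suffices to intersect $S$ with finitely many $T_i$, so WLOG $I$ is finite), then induct on $|I|$, where in the inductive step with indices $\{1,\dots,n\}$ I apply the three-subtree fact to $T_1$, $T_2$, and $\bigcap_{j=3}^n T_j$ (the last being a nonempty subtree by induction, and it meets both $T_1$ and $T_2$ since... — here again one must verify $\bigl(\bigcap_{j\ge 3}T_j\bigr)\cap T_1\neq\emptyset$, which itself follows from the induction hypothesis applied to the family $\{T_1,T_3,T_4,\dots,T_n\}$, and similarly for $T_2$).

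To keep the write-up short I would state the median fact as a one-line sublemma (it is standard: in a tree, for vertices $x,y,z$, the vertex on path $xy$ closest to $z$ equals the vertex on path $xz$ closest to $y$, and this common vertex lies on path $yz$ too), then give the three-subtree argument in two sentences, then the induction in two sentences. The only place to be careful is not to conflate ``$T_j$ meets each of several subtrees'' with ``$T_j$ meets their intersection''—that is exactly what the induction hypothesis is for, so the induction must be set up so that every pairwise-intersection hypothesis we invoke is over a strictly smaller family. I expect the whole proof to fit in well under half a page.
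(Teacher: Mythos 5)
Your argument is correct. Note that the paper itself does not prove Lemma~\ref{lem:helly}; it is invoked as a well-known fact, so there is no in-paper proof to compare against. Your core argument is the standard one and is sound: the median $\mu(a,b,c)$ of three vertices lies on all three pairwise paths, which settles the case of three pairwise-intersecting subtrees $A,B,C$ by choosing $a\in A\cap B$, $b\in B\cap C$, $c\in A\cap C$; and your second induction is set up correctly, since each of the facts you need --- that $S=\bigcap_{j\ge 3}T_j$ is a nonempty subtree, that $S\cap T_1\neq\emptyset$, and that $S\cap T_2\neq\emptyset$ --- follows from the induction hypothesis applied to a strictly smaller pairwise-intersecting family ($\{T_3,\dots,T_n\}$, $\{T_1,T_3,\dots,T_n\}$, $\{T_2,T_3,\dots,T_n\}$ respectively), together with the observation that a nonempty intersection of subtrees is again a subtree (unique paths lie in every $T_j$ containing both endpoints). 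You were also right to discard your first sketch, which would have needed exactly the unjustified step ``$T_j$ meets $T_{i_1}\cap T_{i_2}\cap T_{i_3}$''. The only cosmetic point: all graphs in this paper are finite, so the reduction from infinite $I$ to finite $I$ is moot (there are only finitely many distinct subtrees of $T$); your compactness-style reduction works but could simply be omitted here.
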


\begin{lemma}\label{lem:separator}
  Let $(T,\beta)$ be a tree decomposition of a graph $G$. Let $e=st$
  be an edge of $T$ and let $T_1,T_2$ be the components of $T-e$.
  Then $\beta(s)\cap \beta(t)$ separates $\beta(T_1)$ from
  $\beta(T_2)$, that is, every path from a vertex of $\beta(T_1)$ to a
  vertex of $\beta(T_2)$ traverses a vertex of
  $\beta(s)\cap \beta(t)$.
\end{lemma}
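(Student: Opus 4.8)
The plan is to exploit the two defining features of a tree decomposition: that for every vertex $x\in V(G)$ the preimage $\beta^{-1}(x)$ induces a connected subtree of $T$, and that every edge of $G$ is contained in some bag. First I would record the crucial observation about the edge $e=st$: if for some $x\in V(G)$ the subtree $\beta^{-1}(x)$ meets both $V(T_1)$ and $V(T_2)$, then, since $T_1$ and $T_2$ are exactly the two components of $T-e$ and $\beta^{-1}(x)$ is connected in $T$, this subtree must contain the edge $e$, in particular both $s$ and $t$; hence $x\in\beta(s)\cap\beta(t)$. Equivalently: every vertex of $G$ that does \emph{not} lie in $\beta(s)\cap\beta(t)$ has its subtree entirely inside $V(T_1)$ or entirely inside $V(T_2)$.

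Next I would argue by contradiction. Suppose there is a path $P=(x_0,x_1,\dots,x_p)$ in $G$ with $x_0\in\beta(T_1)$ and $x_p\in\beta(T_2)$ that avoids $\beta(s)\cap\beta(t)$ altogether. By the observation, each $x_i$ is of exactly one of two types: a \emph{left} vertex, meaning $\beta^{-1}(x_i)\subseteq V(T_1)$, or a \emph{right} vertex, meaning $\beta^{-1}(x_i)\subseteq V(T_2)$. Since $x_0\in\beta(T_1)$, the subtree $\beta^{-1}(x_0)$ meets $V(T_1)$, so $x_0$ is a left vertex; symmetrically $x_p$ is a right vertex. Hence, walking along $P$, there is an index $i$ with $x_i$ a left vertex and $x_{i+1}$ a right vertex. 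But $x_ix_{i+1}\in E(P)\subseteq E(G)$, so some bag $\beta(a)$ contains both $x_i$ and $x_{i+1}$, i.e.\ $a\in\beta^{-1}(x_i)\cap\beta^{-1}(x_{i+1})\subseteq V(T_1)\cap V(T_2)=\emptyset$, a contradiction. Therefore every path from $\beta(T_1)$ to $\beta(T_2)$ traverses a vertex of $\beta(s)\cap\beta(t)$.

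I do not expect any genuine obstacle here; the statement is classical and the argument is just a discrete intermediate-value argument along $P$. The only points needing (routine) care are verifying that the two endpoints of $P$ fall on opposite "sides" — immediate from the definitions of $\beta(T_1)$ and $\beta(T_2)$ — and invoking the edge-covering axiom for the consecutive pair $x_i,x_{i+1}$ to reach the contradiction. One could alternatively package the type dichotomy through the Helly property (Lemma~\ref{lem:helly}) applied to the subtrees $\beta^{-1}(x_i)$ together with $T_1$ and $T_2$, but the direct argument above is the cleanest route.
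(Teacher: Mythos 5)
Your proof is correct: the key observation that any $x\notin\beta(s)\cap\beta(t)$ has $\beta^{-1}(x)$ contained entirely in $V(T_1)$ or entirely in $V(T_2)$, combined with the discrete intermediate-value step along the path and the edge-covering axiom for the offending pair $x_i,x_{i+1}$, is exactly the standard argument, and all the needed details (disjointness of $V(T_1)$ and $V(T_2)$, classification of the endpoints) are handled. The paper states Lemma~\ref{lem:separator} as a well-known property of tree decompositions and gives no proof of its own, so there is nothing to compare against; your write-up fills that gap correctly.
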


\begin{lemma}\label{lem:connected}
  If $H\subseteq G$ is a connected subgraph of $G$, then
  $\beta^{-1}(V(H))$ is connected in~$T$.
\end{lemma}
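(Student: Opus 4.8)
The plan is to derive this directly from the two axioms of a tree decomposition: that $\beta^{-1}(v)$ is a non-empty connected subtree of $T$ for every $v\in V(G)$, and that every edge of $G$ is contained in some bag. The crucial reformulation of the second axiom is that if $uv\in E(G)$, then a bag containing both $u$ and $v$ is a node of $T$ lying in $\beta^{-1}(u)\cap\beta^{-1}(v)$; hence for every edge of $H$ the two corresponding subtrees intersect.

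First I would set up a leaf-peeling enumeration of $V(H)$. Since $H$ is connected, it has a spanning tree, and a breadth-first (or depth-first) traversal of that spanning tree yields an ordering $V(H)=\{v_1,\dots,v_n\}$ such that every $v_i$ with $i\ge 2$ has a neighbour $v_{j(i)}$ in $H$ with $j(i)<i$. Put $S_i\coloneqq\bigcup_{1\le j\le i}\beta^{-1}(v_j)$, so that $S_n=\beta^{-1}(V(H))$, the set whose connectedness we want to establish.

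Then I would prove by induction on $i$ that $T[S_i]$ is connected. The base case $i=1$ is precisely the axiom that $\beta^{-1}(v_1)$ is a non-empty connected subtree. For the step, the edge $v_{j(i)}v_i\in E(H)\subseteq E(G)$ yields a node in $\beta^{-1}(v_{j(i)})\cap\beta^{-1}(v_i)$; thus $\beta^{-1}(v_i)$ is a connected subtree that shares a node with $S_{i-1}$ (which contains $\beta^{-1}(v_{j(i)})$). The union of two connected subtrees of $T$ that share a node is again connected, so $S_i=S_{i-1}\cup\beta^{-1}(v_i)$ is connected, and the induction goes through; taking $i=n$ finishes the proof.

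I do not anticipate any genuine obstacle: this is a standard property of tree decompositions, and the only point requiring a little care is invoking the spanning-tree ordering so that each subtree $\beta^{-1}(v_i)$ is glued to the already-constructed connected part $S_{i-1}$ via an edge of $H$. An equivalent and slightly slicker phrasing replaces the induction by the observation that the ``intersection graph'' on $V(H)$, with an edge between $u$ and $v$ whenever $\beta^{-1}(u)\cap\beta^{-1}(v)\ne\emptyset$, contains all of $E(H)$ and is therefore connected, and a union of connected vertex sets of a tree indexed by a connected intersection graph is connected.
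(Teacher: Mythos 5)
Your proof is correct. The paper does not actually prove Lemma~\ref{lem:connected}; it is stated (together with Lemmas~\ref{lem:helly} and~\ref{lem:separator}) as a well-known property of tree decompositions, so there is no argument to compare against. Your induction along a spanning-tree ordering of $V(H)$, gluing each connected subtree $\beta^{-1}(v_i)$ to the already-connected union via a bag containing the edge $v_{j(i)}v_i$, is exactly the standard argument one would supply here, and every step (including the base case and the fact that two connected vertex sets of $T$ sharing a node have connected union) is sound.
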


For the proof of Lemma \ref{lem:disjointpaths}, assume that $G$ is
decomposed as described by Theorem \ref{thm:decomposition}. Assume
that $H_1,\ldots, H_i$ have been constructed and let $C$ be a
component of $G_i$ that has a connection to the subgraphs
$H_{i_1},\ldots, H_{i_s}$.  Recall that throughout the construction we
guarantee that the subgraphs $H_{i_1},\ldots, H_{i_s}$ form the minor
model of a complete graph $K_s$.  We first identify one bag of the
decomposition as a bag which intersects many distinct branch sets of
this minor model.  The following lemma follows easily from the
separator properties of tree decompositions, in particular Lemma~\ref{lem:separator}.

\begin{lemma}\label{lem:corebag}
  There can be at most one node $t$ such that $\beta(t)$ intersects
  strictly more than $a(k)$ of the branch sets $H_{i_j}$, for
  $1\leq j\leq s$.
\end{lemma}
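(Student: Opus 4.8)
The plan is to argue by contradiction: suppose two distinct nodes $t_1,t_2\in V(T)$ each have a bag intersecting strictly more than $a(k)$ of the branch sets $H_{i_1},\ldots,H_{i_s}$. Consider the unique path in $T$ from $t_1$ to $t_2$, and let $e=st$ be any edge on this path, so that $T-e$ has components $T_1\ni t_1$ and $T_2\ni t_2$. By Lemma~\ref{lem:separator}, the set $\beta(s)\cap\beta(t)$, which has size at most $a(k)$ by the adhesion bound in Theorem~\ref{thm:decomposition}, separates $\beta(T_1)$ from $\beta(T_2)$ in $G$. The key point is that each $H_{i_j}$ is a connected subgraph of $G$, hence by Lemma~\ref{lem:connected} the set $\beta^{-1}(V(H_{i_j}))$ is a subtree of $T$; therefore $H_{i_j}$ lies entirely on one side of the separator unless one of its vertices lies in $\beta(s)\cap\beta(t)$, i.e.\ unless the subtree $\beta^{-1}(V(H_{i_j}))$ contains both $s$ and $t$.

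Now I would count. Let $A$ be the set of indices $j$ with $\beta^{-1}(V(H_{i_j}))$ meeting $T_1$ (equivalently, $H_{i_j}$ having a vertex in $\beta(T_1)$), and similarly $B$ for $T_2$. Since $\beta(t_1)\subseteq\beta(T_1)$ intersects more than $a(k)$ branch sets, $|A|>a(k)$, and likewise $|B|>a(k)$. For any $j\in A\cap B$, the subtree $\beta^{-1}(V(H_{i_j}))$ meets both components of $T-e$, so it must contain the edge $e$, in particular it contains $s$ and $t$; hence $H_{i_j}$ has a vertex in $\beta(s)\cap\beta(t)$. Because the branch sets are pairwise disjoint, distinct such $j$ contribute distinct vertices to $\beta(s)\cap\beta(t)$, so $|A\cap B|\le |\beta(s)\cap\beta(t)|\le a(k)$. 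But the branch sets in $A\setminus B$ lie entirely in $\beta(T_1)$ and those in $B\setminus A$ lie entirely in $\beta(T_2)$; since $H_{i_1},\ldots,H_{i_s}$ form a minor model of $K_s$, there is an edge of $G$ between some vertex of an $H_{i_j}$ with $j\in A\setminus B$ and some vertex of an $H_{i_{j'}}$ with $j'\in B\setminus A$ (provided both sets are nonempty), and this edge would have to cross the separator $\beta(s)\cap\beta(t)$ without passing through it — contradicting Lemma~\ref{lem:separator}, since the endpoints of an edge of $G$ lie in a common bag. So one of $A\setminus B$, $B\setminus A$ is empty, i.e.\ $A\subseteq B$ or $B\subseteq A$; combined with $|A\cap B|\le a(k)$ this forces $|A|\le a(k)$ or $|B|\le a(k)$, contradicting $|A|,|B|>a(k)$.

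The main obstacle is getting the case analysis airtight: one must handle the possibility that a branch set straddles the separator (the $A\cap B$ case) cleanly, and one must be careful that "$K_s$ minor model" really does supply a connecting edge between the two sides whenever both $A\setminus B$ and $B\setminus A$ are nonempty — this uses that in a minor model every pair of branch sets is joined by an edge of $G$, and that such an edge has both endpoints in a single bag, hence on a single side of any adhesion separator it does not meet. Once that is pinned down, the counting is immediate. An alternative, perhaps cleaner packaging is to pick $t_1$ to be the node of $T$ whose bag meets the most branch sets and show directly that moving to any neighbour strictly decreases (or fails to increase) this count past $a(k)$, but the contradiction argument above is the most transparent.
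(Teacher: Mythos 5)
Your proof is correct and follows essentially the same route as the paper's: both derive a contradiction with the adhesion bound $a(k)$ by applying Lemma~\ref{lem:separator} to an edge separating $t_1$ from $t_2$ and exploiting that the branch sets model a complete graph, the paper merely packaging the count differently (it pairs each branch set met by $\beta(t_1)$ with one met by $\beta(t_2)$ and shows each connected pair forces its own vertex into the adhesion set, instead of your split into straddling versus one-sided branch sets). The only step to spell out is that a branch set meeting both $\beta(T_1)$ and $\beta(T_2)$ actually contains a vertex of $\beta(s)\cap\beta(t)$; this follows from its connectivity together with Lemma~\ref{lem:separator}, not merely from the fact that its subtree $\beta^{-1}(V(H_{i_j}))$ contains both $s$ and $t$.
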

\begin{proof}
  Assume there are two distinct nodes $t_1$ and $t_2$ with this property, 
  and suppose $\beta(t_1)$ intersects $R^1_1, \ldots, R^1_{a(k)+1}\in \{H_{i_1},\ldots, H_{i_s}\}$, whereas 
  $\beta(t_2)$ intersects $R^2_1, \ldots, R^2_{a(k)+1}\in \{H_{i_1},\ldots, H_{i_s}\}$.
  Note that some branch sets $R^1_p$ and $R^2_q$ may coincide, but by reorder if necessary, we may assume w.l.o.g. that $R^1_p=R^2_q$ can happen only if $p=q$. 
  Observe that since $H_{i_1},\ldots, H_{i_s}$ form a minor model of a
  complete graph, the union of the vertex sets of $R^1_p$ and $R^2_p$ induces a connected subgraph of $G$, for each $p$ with $1\leq p\leq a(k)+1$; note that this is true also if $R^1_p=R^2_p$. 
  Consequently, if $e=s_1s_2$ is any edge of $T$ whose removal disconnects $t_1$ from $t_2$, then by Lemma~\ref{lem:separator} we have that $\beta(s_1)\cap \beta(s_2)$ must contain at least one vertex from 
  $V(R^1_p)\cup V(R^2_p)$ for each $p$ with $1\leq p\leq a(k)+1$. Sets $V(R^1_p)\cup V(R^2_p)$ are disjoint for distinct $p$, hence we conclude that $|\beta(s_1)\cap \beta(s_2)|\geq a(k)+1$.
  This contradicts the fact that $(T,\beta)$ has adhesion at most $a(k)$.
\end{proof}

We now show that there is a bag that intersects every branch set. The
proof is a simple application of the Helly property of trees
(Lemma~\ref{lem:helly}) and Lemma~\ref{lem:connected}.

\begin{lemma}\label{lem:corebag-exists}
  There is a node $t$ such that $\beta(t)$ intersects each $H_{i_j}$,
  for $1\leq j\leq s$.
\end{lemma}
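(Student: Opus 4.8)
The plan is to apply the Helly property for trees (Lemma~\ref{lem:helly}) to the family of subtrees $\{\beta^{-1}(V(H_{i_j}))\}_{1\le j\le s}$ of $T$. Each $H_{i_j}$ is a connected subgraph of $G$, so by Lemma~\ref{lem:connected} the set $T_j \coloneqq \beta^{-1}(V(H_{i_j}))$ is a (non-empty) subtree of $T$. Thus, to invoke Helly it suffices to show that these subtrees are pairwise intersecting, i.e. that for every pair $j\neq j'$ there is a node $t$ with $\beta(t)\cap V(H_{i_j})\neq\emptyset$ and $\beta(t)\cap V(H_{i_{j'}})\neq\emptyset$. Once pairwise intersection is established, Lemma~\ref{lem:helly} yields a node $t\in\bigcap_{j=1}^s V(T_j)$, which by definition of $T_j$ means $\beta(t)$ meets every $H_{i_j}$, as required.

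The pairwise-intersection step is where the hypothesis that $H_{i_1},\dots,H_{i_s}$ form a minor model of $K_s$ enters. Fix $j\neq j'$. Since $v_iv_{j'}$ is an edge of $K_s$ (all pairs are), there exist $x\in V(H_{i_j})$ and $y\in V(H_{i_{j'}})$ with $xy\in E(G)$. By the definition of a tree decomposition, the edge $xy$ is contained in some bag $\beta(t)$; hence $x,y\in\beta(t)$, so $\beta(t)$ meets both $V(H_{i_j})$ and $V(H_{i_{j'}})$, giving $V(T_j)\cap V(T_{j'})\ni t$. (Equivalently, one could argue: the subtrees $\beta^{-1}(x)$ and $\beta^{-1}(y)$ are non-empty subtrees sharing the node $t$ where the edge $xy$ lives, and they are contained in $T_j$ and $T_{j'}$ respectively.) This handles all pairs uniformly, and no appeal to the stronger Lemma~\ref{lem:corebag} is needed here — that lemma will be used separately in the main argument for Lemma~\ref{lem:disjointpaths}.

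I do not anticipate a genuine obstacle: the argument is a routine combination of the Helly property with the edge-in-a-bag axiom of tree decompositions, and the $K_s$-minor-model invariant supplies exactly the adjacencies needed to make the subtrees pairwise intersecting. The only point to state carefully is that each $T_j$ is non-empty — which follows because $H_{i_j}$ has at least one vertex $v$ and $\beta^{-1}(v)\neq\emptyset$ by the tree-decomposition axioms — so that Lemma~\ref{lem:helly} applies to a family of genuine subtrees.
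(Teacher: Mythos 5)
Your proposal is correct and follows essentially the same route as the paper: apply Lemma~\ref{lem:connected} to get subtrees $\beta^{-1}(V(H_{i_j}))$, use the $K_s$-model adjacencies together with the edge-in-a-bag axiom to see that these subtrees pairwise intersect, and conclude via the Helly property (Lemma~\ref{lem:helly}). You merely spell out the pairwise-intersection step slightly more explicitly than the paper does, which is fine.
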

\begin{proof}
As each $H_{i_j}$ is connected, by Lemma~\ref{lem:connected} we have that $\beta^{-1}(V(H_{i_j}))$ 
induces a subtree $T_{i_j}$ in $T$. As the $\{H_{i_1},\ldots, H_{i_s}\}$ form a model of a complete graph, they 
are pairwise connected and hence, by the definition of tree decompositions, any two such subtrees 
intersect. By Lemma~\ref{lem:helly}, they all intersect in one node $t$, which is as desired. 
\end{proof}

Hence, provided $s>a(k)$, there is a node $t$ with $\beta(t)$
intersecting at least $a(k)+1$ branch sets~$H_{i_j}$.  By
Lemma~\ref{lem:corebag}, this node is unique. We call it the
\emph{core node} of the minor model.  Next we show that if the model
is large, then its core node must be a bounded degree node. Shortly
speaking, this is because the model $H_{i_1},\ldots, H_{i_s}$ trimmed
to the torso of the core node is already a minor model of $K_s$ in
this torso.

\begin{lemma}\label{lem:degreebag}
  If $s>\max\{a(k), e(k)\}$, then the core node of the minor model is
  a bounded degree node.
\end{lemma}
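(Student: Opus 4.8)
The plan is to show that the core node $t$ of the minor model $H_{i_1},\ldots,H_{i_s}$ cannot be an excluded minor node, hence by Theorem~\ref{thm:decomposition} it must be a bounded degree node. Suppose for contradiction that $t$ is an excluded minor node, so that $\tau(t)$ excludes $K_{e(k)}$ as a minor. The strategy is to exhibit a minor model of $K_s$ inside $\tau(t)$, which, since $s > e(k)$, contradicts the assumption.

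\medskip

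First I would fix, for each $j$ with $1\le j\le s$, a vertex $b_j\in \beta(t)\cap V(H_{i_j})$; these exist by the definition of the core node (Lemma~\ref{lem:corebag-exists}) and are pairwise distinct because the $H_{i_j}$ are pairwise disjoint. The intended branch sets in $\tau(t)$ are built from $H_{i_j}\cap \beta(t)$ together with the ``virtual edges'' of the torso. The key observation is how a connecting path in $G$ between two branch sets $H_{i_p}$ and $H_{i_q}$ projects onto the torso: such a path leaves $\beta(t)$ only through pieces that live entirely inside subtrees hanging off $t$, and each maximal such excursion enters and leaves the bag $\beta(s')\cap\beta(t)$ for the neighbour $s'$ of $t$ on that side, which is a clique in $\tau(t)$. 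So the portion of $H_{i_p}$ (respectively $H_{i_q}$) together with the connecting path, once intersected with $\beta(t)$ and with the excursions contracted using the torso cliques, yields a connected subgraph of $\tau(t)$. Carrying this out for every pair, and using Lemma~\ref{lem:connected} (applied to the tree decomposition) to see that $\beta^{-1}(V(H_{i_j}))$ is a subtree and hence $H_{i_j}\cap\beta(t)$ induces a connected subgraph of $\tau(t)$, gives that $(H_{i_1}\cap\beta(t),\ldots,H_{i_s}\cap\beta(t))$, suitably augmented, forms a minor model of $K_s$ in $\tau(t)$.

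\medskip

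More carefully, I would argue it branch set by branch set rather than pair by pair: for each $j$, the branch set in the torso is $W_j := \beta^{-1}\!\big(V(H_{i_j})\big)$-type contraction, i.e.\ take $V(H_{i_j})\cap\beta(t)$ and observe it is connected in $\tau(t)$ because $H_{i_j}$ is connected, its trace-tree in $T$ is a subtree containing $t$ (Lemma~\ref{lem:connected}), and any two vertices of $H_{i_j}$ in $\beta(t)$ that are joined by a path of $H_{i_j}$ leaving $\beta(t)$ are joined through an adhesion clique of $\tau(t)$ at the neighbour of $t$ separating that excursion (Lemma~\ref{lem:separator}). The same reasoning shows that whenever $H_{i_p}$ and $H_{i_q}$ are joined in $G$ by an edge or a path (which they are, since together with $C$-connections they form a minor model of $K_s$), their traces in $\beta(t)$ are joined in $\tau(t)$ — either directly by an edge of $G[\beta(t)]$, or through an adhesion clique. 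Hence all $s$ traces are pairwise adjacent in $\tau(t)$, giving a $K_s$-minor there.

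\medskip

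The main obstacle I anticipate is the bookkeeping around edges of the $K_s$ minor model that are realised by paths whose \emph{endpoints} lie outside $\beta(t)$: one must be sure that such a path, when restricted to the subtrees hanging off $t$, still forces a vertex of the relevant branch set (or of an adhesion set adjacent to $t$) into $\beta(t)$, so that the contracted image in $\tau(t)$ is genuinely connected. This is exactly where Lemma~\ref{lem:separator} does the work: since $t$ is the unique core node and $\beta(t)$ already meets more than $a(k)$ of the branch sets, no adhesion set can swallow too many of them, and every relevant path is forced back through $\beta(t)$ or through a torso clique. Once this projection is set up cleanly, concluding is immediate: a $K_s$-minor in $\tau(t)$ with $s > e(k)$ contradicts $\tau(t)$ being an excluded minor node, so $t$ is a bounded degree node, as claimed.
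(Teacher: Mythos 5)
Your proposal is correct and follows essentially the same route as the paper: trim each branch set $H_{i_j}$ to its trace $V(H_{i_j})\cap\beta(t)$ in the core bag, use the adhesion cliques of the torso to see that these traces are connected and pairwise adjacent in $\tau(t)$, and conclude that the resulting $K_s$-minor with $s>e(k)$ rules out $t$ being an excluded minor node. You merely spell out the details that the paper dismisses as ``easy to see,'' and your handling of excursions outside $\beta(t)$ via Lemmas~\ref{lem:connected} and~\ref{lem:separator} is exactly the intended justification.
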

\begin{proof}
  As $s>a(k)$, by Lemma~\ref{lem:corebag} we can identify the 
  unique core node $t$ whose bag intersects all the branch sets $H_{i_j}$. 
  Recall that $\tau(t)$ is the graph induced by
  the bag $\beta(t)$ in which all adjacent separators are turned into cliques. 
  It is easy to see that the subgraphs $H_{i_j}':=\tau(t)[V(H_{i_j})\cap \beta(t)]$ are connected
  in $\tau(t)$ and form a minor model of $K_s$. As $s>e(k)$, 
  we infer that $t$ cannot be an excluded minor node, and hence it is a bounded degree node.
\end{proof}

For vertices outside the bag of the core node, the bound promised in
Lemma~\ref{lem:disjointpaths} can be proved similarly as
Lemma~\ref{lem:corebag}.

\begin{lemma}\label{lem:connectionsfromoutside}
  Let $C$ be a component of $G_i$ that has a connection to the
  subgraphs $H_{i_1},\ldots, H_{i_s}$. If $s>a(k)$, then for every
  vertex $v\in V(C)\setminus\beta(t)$, where $t$ is the core node of
  the model, we have that $m(v)\leq a(k)$.
\end{lemma}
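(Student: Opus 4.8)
The plan is to argue by contradiction, following the pattern of the proof of Lemma~\ref{lem:corebag} and using the separator property of tree decompositions (Lemma~\ref{lem:separator}). First I would record a consequence of the two preceding lemmas: since we assume $s>a(k)$, the node supplied by Lemma~\ref{lem:corebag-exists}, whose bag meets all of $H_{i_1},\dots,H_{i_s}$, meets strictly more than $a(k)$ branch sets, so by Lemma~\ref{lem:corebag} it must coincide with the core node $t$. Hence $\beta(t)$ intersects \emph{every} branch set $H_{i_j}$, $1\le j\le s$.

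Now suppose towards a contradiction that $m(v)\ge a(k)+1$ for some $v\in V(C)\setminus\beta(t)$, and fix a witnessing family $Q_1,\dots,Q_{a(k)+1}$ of paths that pairwise share only $v$, have all internal vertices in $G_i$, and such that each $Q_p$ ends in a vertex $w_p$ of a branch set $H_{i_{j_p}}$, with $j_1,\dots,j_{a(k)+1}$ pairwise distinct. Since $v\notin\beta(t)$, the subtree $T_v:=\beta^{-1}(v)$, which is connected by the definition of a tree decomposition, does not contain $t$. Let $s_1$ be the node of $T_v$ closest to $t$ in $T$, let $s_2$ be the next node on the path from $s_1$ to $t$, and set $e:=s_1s_2$. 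Writing $T_1,T_2$ for the components of $T-e$ with $s_1\in T_1$ and $s_2\in T_2$, one checks that $T_v\subseteq T_1$ and $t\in T_2$; consequently $v\in\beta(s_1)\subseteq\beta(T_1)$, $v\notin\beta(T_2)$, and $v\notin S$, where $S:=\beta(s_1)\cap\beta(s_2)$ satisfies $|S|\le a(k)$ by the adhesion bound of Theorem~\ref{thm:decomposition}.

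The heart of the argument is to show that every $Q_p$ contributes a distinct vertex to $S$. Fix $p$ and choose $x_p\in V(H_{i_{j_p}})\cap\beta(t)$, which is non-empty by the first paragraph; then $x_p\in\beta(t)\subseteq\beta(T_2)$. The subgraph $Q_p\cup H_{i_{j_p}}$ is connected, since the path $Q_p$ and the connected branch set $H_{i_{j_p}}$ share the vertex $w_p$, and it contains $v\in\beta(T_1)$ as well as $x_p\in\beta(T_2)$; hence by Lemma~\ref{lem:separator} some vertex of $V(Q_p)\cup V(H_{i_{j_p}})$ lies in $S$, and since $v\notin S$ this vertex in fact lies in $\bigl(V(Q_p)\setminus\{v\}\bigr)\cup V(H_{i_{j_p}})$. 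Finally, the sets $\bigl(V(Q_p)\setminus\{v\}\bigr)\cup V(H_{i_{j_p}})$ for $p=1,\dots,a(k)+1$ are pairwise disjoint: the paths $Q_p$ meet only in $v$, their internal vertices lie in $G_i$ and therefore avoid all branch sets, the only vertex of each $Q_p$ in a branch set is $w_p\in H_{i_{j_p}}$, and the branch sets $H_{i_{j_1}},\dots,H_{i_{j_{a(k)+1}}}$ are themselves pairwise disjoint. This yields $|S|\ge a(k)+1$, contradicting $|S|\le a(k)$.

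I expect the main obstacle to be the bookkeeping in the second and third paragraphs: choosing the separating edge $e$ so that $v$ sits strictly on the $T_1$ side and in particular outside $S$, verifying that the core node really meets all branch sets (which is where the hypothesis $s>a(k)$ enters, via Lemmas~\ref{lem:corebag} and~\ref{lem:corebag-exists}), and checking the pairwise disjointness of the vertex sets carved out by the $Q_p$. Once these points are in place, the proof is essentially a transcription of the argument for Lemma~\ref{lem:corebag}.
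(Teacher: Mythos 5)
Your proof is correct and follows essentially the same route as the paper: both arguments separate $\beta^{-1}(v)$ from the core node $t$ by an edge of $T$, use the fact that $\beta(t)$ meets every branch set to extend each path into $\beta(T_2)$, and then count the pairwise-disjoint path/branch-set unions crossing the adhesion set of size at most $a(k)$. The only cosmetic differences are that the paper takes the separating edge $e=tt'$ incident to $t$ rather than incident to $\beta^{-1}(v)$, and extends each path inside its branch set instead of passing to the connected union $Q_p\cup H_{i_{j_p}}$; neither changes the substance of the argument.
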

\begin{proof}
By the properties of a tree decomposition, there is an edge $e=tt'$ of $T$ such that
$\beta^{-1}(v)$ is contained in the subtree of $T-e$ that contains $t'$.
Suppose $\mathcal{P}$ is a family of paths that connect $v$ with distinct branch sets $H_{i_j}$ and are pairwise disjoint apart from $v$.
Recall that $\beta(t)$ intersects every branch set $H_{i_j}$.
Therefore, by extending each path of $\mathcal{P}$ within the branch set it leads to, we can assume w.l.o.g. that each path of $\mathcal{P}$ connects $v$ with a vertex of $\beta(t)$.
By Lemma~\ref{lem:separator}, this implies that each path of $\mathcal{P}$ intersects $\beta(t)\cap \beta(t')$.
Paths of $\mathcal{P}$ share only $v$, which is not contained in $\beta(t)\cap \beta(t')$, and hence we conclude that $|\mathcal{P}|\leq |\beta(t)\cap \beta(t')|$.
As $\mathcal{P}$ was chosen arbitrarily, we obtain that $m(v)\leq |\beta(t)\cap \beta(t')|\leq a(k)$.
\end{proof}

\vspace{-0.5cm}
We now complete the proof of Lemma~\ref{lem:disjointpaths} by looking
at the vertices inside the core bag.

\begin{proof}[Proof of Lemma~\ref{lem:disjointpaths}]
  We set $\alpha:=a(k)+c(k)+d(k)+e(k)$.  Assume towards a
  contradiction that for some $i$, $1\leq i<\ell$, we have that some
  component $C$ of $G_i$ contains a vertex $v_1$ with $m(v_1)>\alpha$.
  Denote the branch sets that have a connection to $C$ by
  $H_{i_1},\ldots, H_{i_s}$, where $i_1<i_2<\ldots<i_s$.  Let
  $\mathcal{P}$ be a maximum-size family of paths that pairwise share
  only $v_1$ and connect $v_1$ with different branch sets $H_{i_j}$.
  As $m(v_1)>\alpha$, we have that $|\mathcal{P}|>\alpha$, and in
  particular $s>\alpha$.  As $\alpha>a(k)$, by~Lemma~\ref{lem:corebag} 
  and~Lemma~\ref{lem:corebag-exists} we can
  identify the unique core node~$t$ of the minor model. As
  $s>\max\{a(k),e(k)\}$, by Lemma~\ref{lem:degreebag} the core node is
  a bounded degree node.  As $m(v_1)>a(k)$, by
  Lemma~\ref{lem:connectionsfromoutside} we have $v_1\in \beta(t)$.
  As $\mathcal{P}$ contains more than $d(k)$ disjoint paths from $v$
  to distinct branch sets, the degree of $v_1$ in $G$ must be greater
  than $d(k)$, hence $v_1$ is an apex vertex of $\tau(t)$.

  Since $i_1<i_2<\ldots<i_s$, we have that the component $C$ was
  created when $H_{i_s}$ was removed from $G_{i_s-1}$. Let $C'$ be the
  component of $G_{i_s-1}$ that contains $C$ and $H_{i_s}$ (and thus
  $v_1$).  Observe that $C'$ is still connected to
  $H_1, \ldots, H_{i_{s-1}}$, and possibly to some other branch sets.
  Recall that $H_{i_s}$ was constructed as a subtree of the
  breadth-first search tree in $G_{i_s}$ that started in a vertex
  $v_2\in V(C')$ which, at this point of the construction, had maximum
  $m(v_2)$ among vertices in $C'$.  However, at this point vertex
  $v_1$ was also present in $C'$, and $\mathcal{P}$ certifies that it
  could send at least $\alpha-1$ disjoint paths to different branch
  sets among $H_1, \ldots, H_{i_{s-1}}$ (in $\mathcal{P}$, at most one
  path leads to $H_{i_s}$, and all the other paths are also present in
  $C'$). We infer that it held that $m(v_2)\geq \alpha-1$ at the
  moment $v_2$ was taken.  Since
  $\alpha>a(k)+c(k)+d(k)+e(k)\geq a(k)+d(k)+e(k)+1$, the same
  reasoning as above shows that $t$ is also the core vertex of the
  minor model formed by branch sets connected to $C'$.  Thus, by
  exactly the same reasoning we obtain that $v_2$ is also an apex
  vertex of $\tau(t)$.

  Since $\alpha>a(k)+c(k)+d(k)+e(k)$, we can repeat this reasoning
  $c(k)+1$ times, obtaining vertices $v_1,\ldots, v_{c(k)+1}$, which
  are all apex vertices of $\tau(t)$.  This contradicts the fact that
  $\tau(t)$ contains at most $c(k)$ apex vertices.
\end{proof}


\begin{proof}[Proof of Lemma~\ref{lem:reachablevertices}]
  We set $\beta$ so that $\beta\cdot r\geq (2r+1)\cdot \alpha$, where
  $\alpha$ is the constant given by Lemma~\ref{lem:disjointpaths}.
  For the sake of contradiction, suppose there is a family of paths
  $\mathcal{P}$ as in the statement, whose size is larger than
  $(2r+1)\cdot \alpha$.

  Recall that $H_{j}$ was chosen as a subtree of a breadth-first
  search tree in $G_{j-1}$; throughout the proof, we treat $H_j$ as a
  rooted tree.  As $H_j$ is a subtree of a BFS tree, every path from a
  vertex $w$ of the tree to the root $v'$ of the tree is an isometric
  path in $G_{j-1}$, that is, a shortest path between $w$ and $v'$ in
  the graph $G_{j-1}$.  If $P$ is an isometric path in a graph $H$,
  then $|N_r^H(v)\cap V(P)|\leq 2r+1$ for all $v\in V(H)$ and all
  $r\in\N$.  As the paths from $\mathcal{P}$ are all contained in
  $G_{j-1}$, and they have lengths at most $r$, this implies that the
  path family $\mathcal{P}$ cannot connect $v$ with more than $2r+1$
  vertices of $H_{j}$ which lie on the same root-to-leaf path in
  $H_{j}$. Since $|\mathcal{P}|> (2r+1)\cdot \alpha$, we can find a
  set $X\subseteq V(H_j)$ such that $|X|>\alpha$, each vertex of $X$
  is connected to $v$ by some path from $\mathcal{P}$, and no two
  vertices of $X$ lie on the same root-to-leaf path in $H_j$.  Recall
  that, by the construction, each leaf of $H_j$ is connected to a
  different branch set $H_{j'}$ for some $j'<j$.  Consequently, we can
  take the paths of $\mathcal{P}$ leading to $X$ and extend them
  within $H_j$ to obtain a family of more than $\alpha$ disjoint paths
  in $G_{j-1}$ that connect $v$ with different branch sets $H_{j'}$
  for $j'<j$.  This contradicts Lemma~\ref{lem:disjointpaths}.
\end{proof}

\enlargethispage{1cm}
\vspace{-0.5cm}
Observe that the order can be computed in time $\Oof(n^5)$:
for each vertex, we compute by a standard flow algorithm in time
$\Oof(n^3)$ whether it should be chosen as the next tree root to form
a subgraph $H_{i_j}$. This choice has to be made at most $n$ times.

Finally, we state one property of the construction that follows
immediately from Lemma~\ref{lem:disjointpaths}.

\begin{lemma}\label{lem:degbound}
  Each constructed subgraph $H_i$ has maximum degree at most
  $\alpha+1$, where $\alpha$ is the constant given by
  Lemma~\ref{lem:disjointpaths}.
\end{lemma}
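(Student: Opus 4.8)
The plan is to derive a contradiction with Lemma~\ref{lem:disjointpaths}: a vertex of large degree in $H_{i+1}$ would be able to reach many distinct branch sets by internally disjoint paths in $G_i$. Fix an iteration in which some $H_{i+1}$ is constructed from a component $C$ of $G_i$ having connections to the subgraphs $H_{i_1},\dots,H_{i_s}$; for $i=0$ the graph $H_1$ is a single vertex, so there is nothing to prove and we may assume $1\le i<\ell$. Recall that $H_{i+1}$ is a minimal connected subtree of the BFS tree $T$ of $G[C]$ rooted at $v$, and I would view $H_{i+1}$ itself as a tree rooted at $v$.

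First I would record the structural consequence of minimality: for every leaf $\ell\ne v$ of $H_{i+1}$ there is an index $j(\ell)\in\{1,\dots,s\}$ such that $\ell$ is the \emph{only} vertex of $H_{i+1}$ adjacent in $G$ to the branch set $H_{i_{j(\ell)}}$. Indeed, otherwise $H_{i+1}-\ell$ would still be a connected subtree containing $v$ and a neighbour of every $H_{i_j}$, contradicting minimality. In particular $\ell\mapsto j(\ell)$ is injective, so leaves lying in distinct subtrees hanging below a common vertex are adjacent to distinct branch sets.

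Then I would take any $w\in V(H_{i+1})$ and set $d=\deg_{H_{i+1}}(w)$. If $w\ne v$, then exactly one edge at $w$ leads to its parent in $T$, and this parent belongs to $H_{i+1}$ because $H_{i+1}$ is a connected subtree containing both $w$ and $v$; hence $w$ has at least $d-1$ children in $H_{i+1}$ (and $d$ children if $w=v$). For each child $c$, the subtree $T_c$ of $H_{i+1}$ rooted at $c$ contains some leaf $\ell_c\ne v$ of $H_{i+1}$, and by the injectivity above the branch sets $H_{i_{j(\ell_c)}}$ are pairwise distinct as $c$ ranges over the children of $w$. For each such child, I would concatenate the path in $T_c$ from $w$ to $\ell_c$ with one edge from $\ell_c$ into $H_{i_{j(\ell_c)}}$; this produces at least $d-1$ paths that pairwise share only $w$, have all internal vertices inside $C\subseteq G_i$, and end in pairwise distinct branch sets among $H_{i_1},\dots,H_{i_s}$. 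Thus $m(w)\ge d-1$, and Lemma~\ref{lem:disjointpaths} yields $d-1\le m(w)\le\alpha$, i.e.\ $\deg_{H_{i+1}}(w)\le\alpha+1$.

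The only delicate point is the minimality argument and checking that the assembled paths literally satisfy the requirements in the definition of $m(w)$ from the construction — pairwise internally disjoint, internal vertices in $G_i$, and distinct branch sets as the other endpoints — but all of this is immediate from $H_{i+1}\subseteq C$ and the injectivity of $\ell\mapsto j(\ell)$, so I do not expect any genuine obstacle beyond this routine verification.
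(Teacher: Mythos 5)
Your proof is correct and is precisely the argument the paper intends: the paper states Lemma~\ref{lem:degbound} without an explicit proof, calling it immediate from Lemma~\ref{lem:disjointpaths}, and your route (a degree-$d$ vertex of the tree $H_{i+1}$ yields $d-1$ internally disjoint paths in $G_i$ to distinct branch sets, so $d-1\le m(w)\le\alpha$) is exactly that intended deduction. Your key minimality observation --- that each leaf of $H_{i+1}$ other than the root is the unique vertex of $H_{i+1}$ adjacent to some branch set $H_{i_j}$ --- is the same fact the paper itself uses in the proof of Lemma~\ref{lem:reachablevertices}, so there is no gap.
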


\section{Model-checking for successor-invariant first-order
formulas}\label{sec:mc}

A \emph{finite and purely relational signature} $\tau$ is a finite set
$\{R_1,\ldots, R_k\}$ of relation symbols, where each relation symbol
$R_i$ has an associated arity $a_i$. A finite $\tau$-structure $\strA$
consists of a finite set~$A$, the universe of $\strA$, and a relation
$R_i(\strA)\subseteq A^{a_i}$ for each relation symbol $R_i\in \tau$.
If $ \strA$ is a finite $\tau$-structure, then the \emph{Gaifman
  graph} of $\strA$, denoted $G(\strA)$, is the graph with
$V(G(\strA))=A$ and there is an edge $uv\in E(G(\strA))$ if and only
if $u\neq v$ and $u$ and $v$ appear together in some relation
$R_i(\strA)$ of $\strA$. We say that a class $\CCC$ of finite
$\tau$-structures has \emph{bounded expansion} if the graph class
$G(\CCC):=\{G(\strA) : \strA\in \CCC\}$ has bounded
expansion. Similarly, for $r\in\N$, we write $\adm_r(\strA)$ for
$\adm_r(G(\strA))$ etc.

Let $V$ be a set. A successor relation on $V$ is a binary relation
$S\subseteq V\times V$ such that $(V,S)$ is a directed path of length
$|V|-1$. Let $\tau$ be a finite relational signature. A formula
$\phi\in \FO[\sigma\cup\{S\}]$ is \emph{successor-invariant} if for
all $\tau$-structures $\strA$ and for all successor relations
$S_1,S_2$ on $V(\strA)$ it holds that
$(\strA, S_1)\models\phi\Longleftrightarrow (\strA,S_2)\models\phi$.

Successor-invariant logics have been studied in database theory and
finite model theory in the past. It was shown by Rossman
\cite{rossman2007successor} that successor-invariant FO is more
expressive than FO without access to a successor relation. It is known
that successor-invariant FO (in fact even order-invariant FO) can
express only local queries \cite{grohe2000locality}, however, the
proof does not translate formulas into local FO-formulas which could
be evaluated algorithmically.  It was shown in
\cite{eickmeyer2013model} that the model-checking problem for
successor-invariant first-order formulas is fixed-parameter tractable
on any proper minor closed class of graphs. Very recently, the same
result was shown for classes with excluded topological 
minors~\cite{eickmeyer2016model}. We give a new proof of the model-checking
result of~\cite{eickmeyer2016model} which is based on the nice
properties of the order we have constructed for graphs that exclude a
topological minor.

Eickmeyer et al.~\cite{eickmeyer2013model} showed that on well-behaved
classes of graphs one can apply the following reduction from the
model-checking problem for successor-invariant formulas to the
model-checking problem for plain first-order formulas.

\begin{lemma}[Eickmeyer et al.~\cite{eickmeyer2013model}]\label{thm:eickmeyer-combine}
  Let $\CCC$ be a class of $\tau$-structures such that for each
  $\strA\in\CCC$ one can compute in polynomial time a graph $H(\strA)$
  such that
\begin{enumerate}
\item $V(H(\strA))=V(G(\strA))$ and
  $E(H(\strA))\supseteq E(G(\strA))$.
\item $H$ contains a spanning tree $T$ which can be computed in
  polynomial time and which is of maximum degree $d$ for some fixed
  integer $d$ depending on $\CCC$ only.
\item The model-checking problem for first-order formulas on the graph
  class $\{H(\strA) : \strA\in\CCC\}$ is fixed-parameter tractable.
\end{enumerate}
Then the model-checking problem for successor-invariant first-order
formulas is fixed-pa\-ra\-me\-ter tractable on $\CCC$.
\end{lemma}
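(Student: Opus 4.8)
The plan is to use successor-invariance to replace the (inaccessible) successor relation on $V(\strA)$ by one that can be read off from the bounded-degree spanning tree of $H(\strA)$ guaranteed by the hypothesis, at the cost of only mildly enlarging the Gaifman graph. Fix $\strA\in\CCC$ and a successor-invariant sentence $\phi\in\FO[\tau\cup\{S\}]$. Since $\phi$ is successor-invariant, $(\strA,S)\models\phi$ has the same truth value for every successor relation $S$ on $V(\strA)$, so it suffices to exhibit one convenient $S$ and then to decide $(\strA,S)\models\phi$ as an ordinary first-order query.

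First I would compute in polynomial time the graph $H(\strA)$ and its spanning tree $T$ of maximum degree $d$, as provided by properties~1 and~2. Next I would route a successor relation along short $T$-distances: by the classical theorem that the cube of any connected graph on at least three vertices is Hamiltonian (Sekanina, Karaganis) --- which comes with a polynomial-time constructive proof by induction on $T$ --- the graph $T^3$ contains a Hamiltonian path (the cases $|V(\strA)|\le 2$ being trivial). Let $S$ be the edge set of such a path, oriented along it; then $(V(\strA),S)$ is a directed path spanning $V(\strA)$, i.e.\ a genuine successor relation, and every $S$-edge joins two vertices at distance at most~$3$ in $T$, hence (as $T\subseteq H(\strA)$) at distance at most~$3$ in $H(\strA)$, i.e.\ adjacent in $H(\strA)^3$. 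Form the $\tau\cup\{S\}$-structure $\strA':=(\strA,S)$; by property~1 its Gaifman graph satisfies $G(\strA')=G(\strA)\cup S\subseteq H(\strA)\cup H(\strA)^3=H(\strA)^3$.

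It remains to decide $\strA'\models\phi$, which by the choice of $S$ and successor-invariance is equivalent to the desired answer, and which is plain first-order model checking on a structure whose Gaifman graph lies in $\{H(\strA)^3 : \strA\in\CCC\}$. Property~3 gives fixed-parameter tractability of $\FO$ model checking on $\{H(\strA):\strA\in\CCC\}$; since in all applications this class is sparse in the sense of bounded expansion (or nowhere denseness), so is the class of its third powers --- a bounded power of a graph admitting an order witnessing small generalised colouring numbers again admits such an order --- and hence $\FO$ model checking stays tractable there, e.g.\ by the theorem of Grohe et al. Composing the polynomial-time construction of $\strA'$ with this algorithm yields an $f(|\phi|)\cdot|V(\strA)|^{\Oof(1)}$ procedure for deciding $\strA\models\phi$. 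The one delicate point --- and the reason property~3 alone, for an arbitrary tractable class, would not quite suffice --- is exactly this last step: the successor relation must be added as genuine data, so it inevitably contributes $|V(\strA)|-1$ edges to the Gaifman graph, and one has to confine these edges, via the Hamiltonian path in $T^3$, to a bounded power of $H(\strA)$ and then invoke closure of the relevant tractable classes under bounded powers.
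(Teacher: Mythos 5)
The paper does not prove this lemma at all: it is imported verbatim from Eickmeyer et al.~\cite{eickmeyer2013model} (with only the cosmetic remark that the $k$-walks used there are equivalent to bounded-degree spanning trees). So your attempt has to be judged as a proof of the statement from its stated hypotheses, and there it has a genuine gap. Your combinatorial ingredient is right and matches the spirit of the cited source: by Sekanina/Karaganis, $T^3$ has a Hamiltonian path $P$, computable in polynomial time, and its edges join vertices at $T$-distance at most $3$. The problem is what you do with $P$. You add the successor relation $S=E(P)$ as explicit data, so the Gaifman graph of $(\strA,S)$ becomes $G(\strA)\cup P\not\subseteq H(\strA)$, and hypothesis~3 gives fixed-parameter tractable $\FO$ model-checking only on the class $\{H(\strA):\strA\in\CCC\}$ --- not on cubes, and not on $H(\strA)$ with extra shortcut edges. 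Your patch, ``in all applications this class is sparse and sparse classes are closed under bounded powers,'' is inadmissible when proving the lemma as stated (the hypothesis is an abstract tractability assumption, with no sparsity given), and the blanket closure claim is also false as stated: the square of a star is a clique, so powers of graphs with small generalised colouring numbers need not have them unless degrees are bounded. (You could retreat to $H(\strA)\cup T^3$, where $T^3$ has bounded degree, but even ``sparse plus a bounded-degree set of length-$\le 3$ shortcuts stays sparse'' is a claim that needs proof and, again, is simply not supplied by hypothesis~3.)

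The way to close the gap --- and the route of the cited proof --- is to keep the Gaifman graph inside $H(\strA)$: instead of adding $S$ as raw tuples, decompose each edge of the Hamiltonian path of $T^3$ into its underlying $T$-path of length at most $3$, and record these as a bounded number of auxiliary binary relations supported on $E(T)\subseteq E(H(\strA))$ (plus unary markers to stitch the three steps together). Then $S$ is first-order definable from these relations by a fixed formula, so substituting that definition into $\phi$ yields a plain $\FO$ sentence over an expanded structure whose Gaifman graph is a subgraph of $H(\strA)$, where hypothesis~3 applies (modulo the routine step that tractability on the graph class transfers to expansions with the same Gaifman graph, which any proof of this lemma must also address). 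With that modification your argument becomes essentially the intended one; as written, the final step invokes tractability on a class for which the hypotheses give you nothing.
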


We remark that the original lemma from \cite{eickmeyer2013model}
refers to $k$-walks in $H$, which are easily seen to be equivalent to
spanning trees of maximum degree $k$. In our view, spanning trees are
more intuitive to handle in our graph theoretic context.

\begin{lemma}\label{lem:constructingH}
  Let $k\in\N$.  There is a constant $\delta$, depending only on $k$,
  and a function $f\colon \N\rightarrow \N$ such that the following
  holds.  For every graph $G$ with $K_k\not\minor^tG$ we can compute
  in polynomial time a supergraph $H$ with $V(H)=V(G)$ and
  $E(H)\supseteq E(G)$ such that $\adm_r(H)\leq f(r)$ for all
  $r\in \N$ and such that $H$ contains a spanning tree $T$ with
  maximum degree at most $\delta$; furthermore, such a spanning tree
  $T$ can be also computed in polynomial~time.
\end{lemma}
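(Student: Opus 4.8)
The plan is to recycle the construction and analysis from Section~\ref{sec:uniform} almost verbatim, observing that the order~$L$ built there already bundles the vertices into the connected subgraphs $H_1,\ldots,H_\ell$, and that these subgraphs have bounded degree by Lemma~\ref{lem:degbound}. First I would run the construction of Section~\ref{sec:uniform} on~$G$, obtaining the subgraphs $H_1,\ldots,H_\ell$ (each connected, of maximum degree at most $\alpha+1$ by Lemma~\ref{lem:degbound}), the order~$L$, and the graphs $G_i = G - \bigcup_{j\le i}V(H_j)$. I then define~$H$ to be the supergraph of~$G$ obtained by adding, for each~$i$ and each component $C$ of $G_i$ with connected branch sets $H_{i_1},\ldots,H_{i_s}$, a carefully chosen small set of edges that will let us build the spanning tree, while keeping $\adm_r(H)$ bounded. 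Concretely, inside each $H_i$ we keep the BFS-tree structure (so $H_i$ itself is already a tree of bounded degree), and we add edges connecting each $H_{i+1}$ to the previously-built subgraphs it is adjacent to, using Lemma~\ref{lem:disjointpaths} to bound how many such subgraphs there are.

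The key steps are as follows. (1) Note each $H_i$, being a subtree of a BFS tree, is a tree of maximum degree at most $\alpha+1$; designate a root $v_i\in V(H_i)$ (the BFS root). (2) For $i\ge 2$, the subgraph $H_i$ was chosen as a minimal subtree of a BFS tree in $G_{i-1}$ touching the neighbours in its component~$C'$ of certain earlier branch sets $H_{i_1},\ldots,H_{i_{s'}}$; by the invariant these form a minor model of $K_{s'}$, and by Lemma~\ref{lem:disjointpaths} (applied to the component of $G_{i_{s'}-1}$ just before $H_i$ was carved out, via the $m(\cdot)$ bound) we get $s'\le \alpha$. Since $H_i$ is minimal, it has at most $s'\le\alpha$ leaves, each adjacent in~$G$ to a distinct earlier $H_j$. (3) To form the spanning tree $T$ of $H$: take the disjoint union of the trees $H_1,\ldots,H_\ell$ and, for each $i\ge 2$, add one edge from the root $v_i$ of $H_i$ to a fixed vertex of one of the earlier $H_j$'s to which $H_i$ is connected; these edges are added to $H$ if not already present. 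This yields a spanning tree; to bound its degree, redistribute — rather than hanging all children of $H_i$ off a single vertex, observe each $H_j$ receives at most $(\alpha+1)\cdot$(number of its leaves)$\,\le (\alpha+1)\alpha$ incoming "child" subgraphs through its own leaves, so routing each incoming edge of $T$ to the leaf of $H_j$ nearest the relevant neighbour keeps the degree in~$T$ bounded by some $\delta=\delta(k)$; adding the internal degree $\le\alpha+1$ of $H_j$ gives the final constant. (4) For $\adm_r(H)\le f(r)$: the edges we added to $H$ only connect a vertex of some $H_i$ to a vertex of an earlier $H_j$ that was \emph{already connected to the component of $G_{j}$ containing that vertex}, i.e.\ each added edge $xy$ with $x\in H_i$, $y\in H_j$, $j<i$, corresponds to a path of length at most $2$ in $G_{j-1}$ (through a common neighbour in the component). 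Hence any short path in~$H$ can be rerouted to a short path in~$G$ of length at most $2r$, so $\adm_r(H)\le\col_{2r}(G)\le\wcol_{2r}(G)\le(\adm_{2r}(G))^{2r}$, which is bounded in terms of~$r$ by Corollary~\ref{cor:wrapup}; set $f(r)$ accordingly. Alternatively, and more cleanly, one reruns the admissibility analysis of Corollary~\ref{cor:wrapup} directly on~$H$ with the same order~$L$, absorbing the $O(1)$ extra edges per vertex into the constant.

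The main obstacle I anticipate is step~(3): naively attaching each $H_i$ to an earlier subgraph blows up the degree at the attachment point, since many later subgraphs can all be connected to the same $H_j$. The fix is the redistribution argument — we must exploit that $H_j$ is connected to a \emph{component} $C$ of $G_j$ only through its at-most-$\alpha$ leaves (or at any rate through few "portal" vertices, a fact one extracts from the minimality of the BFS-subtree together with Lemma~\ref{lem:disjointpaths}), so the incoming tree-edges can be spread over these portals, each portal receiving a bounded number. Making this counting precise — matching each later subgraph $H_i$ to a specific portal vertex of its chosen parent $H_j$ so that no portal is overloaded — is the one place where some care is needed; everything else is a routine combination of Lemma~\ref{lem:degbound}, Lemma~\ref{lem:disjointpaths}, and Corollary~\ref{cor:wrapup}. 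Finally, polynomial-time computability of both $H$ and $T$ is immediate, since the underlying construction of $L$ already runs in time $\Oof(n^5)$ and the extra bookkeeping is linear.
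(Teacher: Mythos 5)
Your high-level plan (reuse the construction of Section~\ref{sec:uniform}, add a few edges to $G$, and reroute short paths of $H$ into short walks of $G$) is the same as the paper's, but the two steps where the real work happens do not go through as you describe. First, your step~(2) claims $s'\le\alpha$ and hence that each $H_i$ has at most $\alpha$ leaves. Lemma~\ref{lem:disjointpaths} only bounds $m(v)$, the number of paths from a \emph{single} vertex to distinct branch sets that are disjoint apart from $v$; the number $s'$ of branch sets connected to a component (and hence the number of leaves of the minimal BFS-subtree) is not bounded by any function of $k$ --- think of a long path component attached to many branch sets, where every single vertex can still only send $O(1)$ disjoint paths. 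Only the \emph{maximum degree} of $H_i$ is bounded (Lemma~\ref{lem:degbound}). Second, and more seriously, the degree bound for your spanning tree in step~(3) fails: unboundedly many later subgraphs $H_i$ may be adjacent to the earlier $H_j$ \emph{through one and the same vertex} $w$ of $H_j$ (a star centred at $w$ whose leaves end up in different components is the prototypical obstruction). Your ``redistribution to the nearest portal'' then sends all these attachment edges to the same vertex, so no bounded $\delta$ is obtained; and if instead you redistribute to far-away vertices of $H_j$, the added edges are no longer distance-$2$ shortcuts in $G$, which destroys your step~(4) rerouting argument (your attachment at the BFS root $v_i$ already has this problem, since $v_i$ need not be adjacent to $H_j$ at all).

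The paper's proof resolves exactly this point with a device absent from your proposal: all vertices $u_1,\dots,u_h$ charged to the same vertex $w$ are chained into a path $u_1u_2,\dots,u_{h-1}u_h$ of \emph{new} edges among themselves, and only $u_1$ is joined to $w$; since all $u_p$ are neighbours of $w$ in $G$, every new edge is a distance-$2$ shortcut, the degree increase at every vertex is at most $3$, and together with Lemma~\ref{lem:degbound} this gives $\delta=\alpha+4$. Moreover, the admissibility bound is not a free consequence of Corollary~\ref{cor:wrapup}: after rerouting, the walks in $G$ may share the charge vertices $w$, so one cannot appeal to disjointness or to $\adm_{2r}(G)$, and distinct endpoints need not be weakly reachable. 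The paper therefore modifies the construction itself --- the attachment vertex $w'_{i+1}$ is chosen $L$-largest among the neighbours of the component inside the last connected branch set, and $w_{i+1}$ is made $L$-smallest inside $H_{i+1}$ --- so that every rerouted walk has internal vertices no smaller than $v$, and the endpoints land in $\Sreach_{2r}[G,L,v]$, giving $\adm_r(H)\le\col_{2r}(G)+2$. Without these ordering restrictions and the chaining construction, your ``absorb the $O(1)$ extra edges into the constant'' step has nothing to absorb into, so the proposal has a genuine gap at both the degree bound and the admissibility bound.
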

\begin{proof}
  Without loss of generality, we assume that $G$ is
  connected. Otherwise, we may apply the construction in each
  connected component separately, and then connect the components
  arbitrarily using single edges (added to $H$) in a path-like manner.
  It is easy to see that including the additional edges to the
  spanning tree increases its maximum degree by at most~$2$, while the
  admissibility of the graph also increases by at most $2$.

  We perform the construction of the subgraphs $H_1,\ldots, H_\ell$
  almost exactly as in~Lemma~\ref{sec:uniform}.  However, when
  constructing the $H_i$'s and the order $L$, we put some additional
  restrictions that do not change the quality of $L$.  First, recall
  that when we defined $H_{i+1}$, for some $0\leq i<\ell$, we
  considered a tree of breadth-first search starting at $v_{i+1}$ in a
  connected component $C$ of $G_i$. Suppose that the subgraphs that
  $C$ is connected to are $H_{i_1},\ldots,H_{i_s}$, where
  $1\leq i_1<\ldots<i_s\leq i$.  Then $H_{i+1}$ was defined as a
  minimal subtree of the considered BFS tree that contained, for each
  $1\leq j\leq s$, some vertex of $H_{i_j}$ that is adjacent to $C$.
  Observe that in the construction we were free to choose which
  neighbour of $H_{i_j}$ will be picked to be included in $H_{i+1}$.
  For $j<s$ we make an arbitrary choice as before, but the neighbour
  of $H_{i_s}$ (if exists; note that this is the case for $i>0$) is
  chosen as follows.  We first select the vertex
  $w'_{i+1}\in V(H_{i_s})$ that is the largest in the order $L$ among
  those vertices of $H_{i_s}$ that are adjacent to $C$ (the vertices
  of $H_j$ for $j\leq i$ are already ordered by $L$ at this point).
  Then, we select any its neighbour $w_{i+1}$ in $C$ as the vertex
  that is going to be included in $H_{i+1}$ in its construction.
  Finally, recall that in the construction of $L$, we could order the
  vertices of $H_{i+1}$ arbitrarily.  Hence, we fix an order of
  $H_{i+1}$ so that $w_{i+1}$ is the smallest among $V(H_{i+1})$.
  This concludes the description of the restrictions applied to the
  construction.

  We now construct $H$ by taking $G$ and adding some edges. During the
  construction, we will mark some edges of $H$ as {\em{spanning
      edges}}.  We start by marking all the edges of all the trees
  $H_i$, for $1\leq i\leq \ell$, as spanning edges. At the end, we
  will argue that the spanning edges form a spanning tree of $H$ with
  maximum degree at most $\delta$.

  For each $i$ with $1\leq i<\ell$, let us examine the vertex
  $w_{i+1}$, and let us {\em{charge}} it to $w_{i+1}'$.  Note that in
  this manner every vertex $w_{i+1}$ is charged to its neighbour that
  lies before it in the order $L$.  For any $w\in V(G)$, let $D(w)$ be
  the set of vertices charged to $w$. Now examine the vertices of $G$
  one by one, and for each $w\in V(G)$ do the following. If
  $D(w)=\emptyset$, do nothing. Otherwise, if
  $D(w)=\{u_1,u_2,\ldots,u_h\}$, mark the edge $wu_1$ as a spanning
  edge, and add edges $u_1u_2,u_2u_3,\ldots,u_{h-1}u_h$ to $H$,
  marking them as spanning edges as well.

\begin{claim}\label{cl:degbnd}
  The spanning edges form a spanning tree of $H$ of maximum degree at
  most $\alpha+4$, where~$\alpha$ is the constant given by
  Lemma~\ref{lem:disjointpaths}.
\end{claim}
\begin{claimproof}
Because the branch sets partition the graph, the spanning edges form a spanning subgraph of $H$.
Because we connect the branch set $H_{i+1}$ only to the largest reachable branch set $H_{i_s}$ 
(and this set is never again the largest reachable branch set for $H_j$, $j>i$), the spanning 
subgraph is acyclic. It is easy to see that the spanning subgraph is also connected.  
By Lemma~\ref{lem:degbound}, we have that each $H_i$ has maximum degree at most $\alpha+1$.
Also, for every vertex $w\in V(G)$, at most $3$ additional edges incident to $w$ in $H$ are marked as spanning (two edges are contributed by the path from $u_1$ to $u_h$ (only $u_1$ charges to a different vertex and has degree $1$ on the path) and one edge 
may be added if a vertex is charged to it). In total, this means that
$H$ has maximum degree bounded by $\alpha+4$.
\end{claimproof}

It remains to argue that $H$ has small admissibility. For this, it
suffices to prove the following claim.  The proof uses the additional
restrictions we introduced in the construction.

\begin{claim}\label{cl:adm}
  Let $r$ be a positive integer. If the order $L$ satisfies 
  $\max_{v\in V(G)} |\Sreach_{2r}[G,L,v]|\leq m$, that is, 
  the order certifies $\col_{2r}(G)\leq m$, 
  then $\adm_r(H)\leq m+2$.
\end{claim}
\begin{claimproof}
We verify that for each $r$, the order $L$ certifies that $\adm_r(H)\leq m+2$. For this, take any vertex $v\in V(H)=V(G)$, and let $\mathcal{P}$ be any family of paths of length at most $r$ in $H$
that start in $v$, end in distinct vertices smaller than $v$ in $L$, and are pairwise internally disjoint.
We can further assume that all the internal vertices of all the paths from $\mathcal{P}$ are larger than $v$ in $L$.
Let $i$, $0\leq i<\ell$, be such that $v\in V(H_{i+1})$.
We distinguish two cases: either $v=w_{i+1}$ or $v\neq w_{i+1}$.

We first consider the case $v\neq w_{i+1}$; the second one will be very similar. By the construction of the order $L$, it follows that $w_{i+1}<_L v$.
Consider any path $P\in \mathcal{P}$. Then $P$ is a path in $H$; we shall modify it to a walk $P'$ in $G$ as follows.
Suppose $P$ uses some edge $e$ that is not present in~$H$. By the construction of~$H$, it follows that
$e=u_1u_2$ is an edge connecting two vertices that are charged to the same vertex $w$; suppose w.l.o.g. $P$ traverses $e$ from $u_1$ to $u_2$. 
Define $P'$ by replacing the traversal of $e$ on $P$ by a path of length two consisting of $u_1w$ and $wu_2$, and making the same replacement for all other edges on $P$ that do not belong to $G$.

We claim that all the internal vertices of $P'$ are not smaller, in $L$, than $v$. For this, it suffices to show that whenever some edge $u_1u_2$ is replaced by a path $(u_1w, wu_2)$ as above,
then we have that $v\leq_L w$. Aiming towards a contradiction, suppose that $u_1u_2$ is the first edge on $P$ for which we have $w<_L v$.
By the construction, it must be that $(u_1,u_2)=(w_{j_1},w_{j_2})$ for some $j_1,j_2>i+1$, and $w=w_{j_1}'=w_{j_2}'$. 
Let $j$ be such that $w\in H_j$.
When constructing $H_{j_1}$, we chose $w=w_{j_1}'$ as the largest, w.r.t. $L$, vertex of $H_j$ which was adjacent to the connected component $C'$ of $G_{j_1-1}$ that contains $H_{j_1}$.
Observe that the prefix of $P'$ up to $w_{j_1}$ is a path in $G$ that, by the choice of $u_1u_2$, contains only vertices not smaller in $L$ than $v$. This prefix has to access the connected component $C'$
from some vertex $q$, for which we of course have $v\leq_L q$. If $q\notin V(H_j)$ then, as $H_j$ is the last among subgraphs connected to $C'$, we have that $q\in V(H_{j'})$ for some $j'<j$ and, consequently,
$v\leq_L q<_L w$.
Otherwise, if $q\in V(H_j)$, then by the choice of $w=w_{j_1}'$ as the last, in $L$, vertex of the neighbourhood of $C'$ within $H_j$, we also have $v\leq_L q\leq_L w$.
In both cases we conclude that $v\leq_L w$, a contradiction.

Hence, if we apply the above procedure to all the paths from $\mathcal{P}$, we obtain a family $\mathcal{P}'$ of walks in $G$ with the following properties: 
each walk of $\mathcal{P}'$ has length at most $2r$, it connects $v$ with a different vertex that is smaller in $L$ than $v$, and all its internal vertices are not smaller than $v$.
Note here that walks from $\mathcal{P}'$ are not necessarily disjoint, but still their number must be bounded by $\max_{v\in V(G)} |\Sreach_{2r}[G,L,v]|\leq m$. It follows that $|\mathcal{P}|\leq m$.

Finally, we consider the second case $v=w_{i+1}$. Observe that in the construction of $H$, we added at most $2$ additional edges incident to $v$ that connect $v$ with other vertices charged to the same vertex.
At most two paths from $\mathcal{P}$ can use these edges, and for the other paths we may apply exactly the same reasoning as in the first case. 
It follows that $|\mathcal{P}|\leq m+2$ in this case; this concludes the proof.
\end{claimproof}

The statement of the lemma now directly follows from Claim~\ref{cl:degbnd} and Claim~\ref{cl:adm}.
\end{proof}

Given a graph $G$ that excludes $K_k$ as a topological minor, let us
write $H(G)$ for a graph constructed according to
Lemma~\ref{lem:constructingH}.

\begin{corollary}
The class $\{H(G) : K_k\not\minor^t G\}$ has bounded expansion. 
\end{corollary}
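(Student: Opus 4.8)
The plan is to derive this corollary directly from the admissibility bound established in Lemma~\ref{lem:constructingH} together with Zhu's characterisation of bounded expansion (Theorem~1). Recall that a class $\CCC$ of graphs has bounded expansion if and only if there is a function $f\colon\N\to\N$ such that $\adm_r(G)\leq f(r)$ for all $r\in\N$ and all $G\in\CCC$; this equivalence is precisely the content of the remark following Theorem~1, which uses inequality~\eqref{eq:gen-col-ineq}. So it suffices to produce a single function bounding the $r$-admissibility of every member of the class $\{H(G)\colon K_k\not\minor^t G\}$, uniformly in the choice of $G$.

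First I would fix $k\in\N$, so that the class under consideration consists of the graphs $H(G)$ where $G$ ranges over all graphs with $K_k\not\minor^t G$. Lemma~\ref{lem:constructingH} provides, for this fixed $k$, a single function $f\colon\N\to\N$ (depending only on $k$) such that for every such $G$ the constructed supergraph $H=H(G)$ satisfies $\adm_r(H)\leq f(r)$ for all $r\in\N$. The key point is that $f$ does not depend on the particular graph $G$, only on the excluded topological minor $K_k$ — this is exactly how Lemma~\ref{lem:constructingH} is phrased. Consequently, the same function $f$ witnesses the admissibility condition for every member of the class $\{H(G)\colon K_k\not\minor^t G\}$ simultaneously.

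Second, I would invoke Theorem~1 (Zhu) in the admissibility formulation: since we have exhibited a function $f$ with $\adm_r(H')\leq f(r)$ for all $r\in\N$ and all $H'$ in the class, the class has bounded expansion. This completes the argument. There is essentially no obstacle here; the corollary is a bookkeeping consequence of the uniformity built into Lemma~\ref{lem:constructingH}. The only point worth spelling out — if one wants to be careful — is that Lemma~\ref{lem:constructingH} is stated for a \emph{fixed} $k$, so the class $\{H(G)\colon K_k\not\minor^t G\}$ is likewise taken for a fixed $k$; the statement is not claiming that the union over all $k$ has bounded expansion (indeed, that would be false, since every graph excludes $K_k$ topologically for $k$ large enough).

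\begin{proof}
Fix $k\in\N$. By Lemma~\ref{lem:constructingH}, there is a function $f\colon\N\to\N$, depending only on $k$, such that for every graph $G$ with $K_k\not\minor^t G$, the graph $H=H(G)$ satisfies $\adm_r(H)\leq f(r)$ for all $r\in\N$. Since $f$ does not depend on the particular graph $G$, the same function $f$ bounds the $r$-admissibility of every member of the class $\{H(G)\colon K_k\not\minor^t G\}$, uniformly over all $r\in\N$. By Theorem~\ref{zhu2009colouring} (and the remark following it, which states that bounded expansion may equivalently be witnessed by a bound on $\adm_r$ in view of Inequality~\eqref{eq:gen-col-ineq}), it follows that the class $\{H(G)\colon K_k\not\minor^t G\}$ has bounded expansion.
\end{proof}
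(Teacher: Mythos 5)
Your proof is correct and follows exactly the route the paper intends: the corollary is an immediate consequence of the uniform (in $G$, for fixed $k$) admissibility bound $\adm_r(H(G))\leq f(r)$ from Lemma~\ref{lem:constructingH} combined with Zhu's characterisation of bounded expansion via the generalised colouring numbers and Inequality~\eqref{eq:gen-col-ineq}. Your remark that the claim is for fixed $k$ (and not for the union over all $k$) is a sensible clarification, and nothing in the argument is missing.
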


We can now use Theorem~\ref{thm:eickmeyer-combine} to combine the
following result of Dvo\v{r}ak et al.~\cite{dvovrak2013testing} with
Lemma~\ref{lem:constructingH}, to prove fixed-parameter tractability
of successor-invariant $\FO$ on classes that exclude a fixed
topological minor.

\begin{lemma}[Dvo\v{r}\'ak et al.~\cite{dvovrak2013testing}]
  The model-checking problem for first-order formulas is
  fixed-parameter tractable on any class of bounded expansion.
\end{lemma}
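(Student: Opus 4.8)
The plan is to reduce first-order model-checking on a bounded expansion class to model-checking on structures of bounded tree-depth, exploiting the characterisation of bounded expansion by \emph{low tree-depth colourings}. Fix a class $\CCC$ of bounded expansion (of $\tau$-structures) and a sentence $\phi\in\FO[\tau]$, and let $q$ be the number of quantifier occurrences in $\phi$. Since $\CCC$ has bounded expansion, $\wcol_p(G(\strA))\le c(p)$ for some function $c$ and all $p$, $\strA\in\CCC$ (Zhu's theorem, stated above), and I would invoke the low tree-depth colouring theorem of Ne\v{s}et\v{r}il and Ossona de Mendez~\cite{nevsetvril2008grad}: there is a number $g=g(\CCC,q)$ such that $V(G(\strA))$ can be partitioned into colour classes $V_1,\dots,V_g$ so that the union of any at most $q$ of them induces a subgraph of $G(\strA)$ of tree-depth at most $q$; moreover such a colouring is computable in polynomial (in fact linear) time, e.g.\ by greedily processing a $\wcol_q$-order, or via the augmentation technique. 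Let $\strA^+$ denote $\strA$ expanded by unary predicates $C_1,\dots,C_g$ naming these classes.

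The next step is to \emph{relativise the quantifiers of $\phi$ to colour classes}. Since $V_1,\dots,V_g$ partition the universe, $\exists x\,\chi\equiv\bigvee_{c\le g}\exists x\,(C_c(x)\wedge\chi)$ and dually $\forall x\,\chi\equiv\bigwedge_{c\le g}\forall x\,(C_c(x)\to\chi)$. Pulling these finite disjunctions and conjunctions outward for each of the $q$ quantifier occurrences of $\phi$, the sentence becomes, over $\strA^+$, equivalent to a Boolean combination indexed by tuples $\bar c=(c_1,\dots,c_q)\in\{1,\dots,g\}^q$ of sentences $\phi^{\bar c}$ in which the variable of the $i$-th quantifier is guarded to lie in $V_{c_i}$. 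The crucial observation is that in $\phi^{\bar c}$ every quantified variable is confined to the set $W_{\bar c}:=\bigcup_i V_{c_i}$, which is a union of at most $q$ colour classes; hence $\strA^+\models\phi^{\bar c}$ if and only if $\strA^+[W_{\bar c}]\models\phi^{\bar c}$, and by the colouring property the Gaifman graph of $\strA^+[W_{\bar c}]$ has tree-depth at most $q$. There are only $g^q$ such tuples, a constant depending on $\CCC$ and $\phi$.

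It then remains to evaluate each $\phi^{\bar c}$ on the bounded-tree-depth structure $\strA^+[W_{\bar c}]$. Since graphs of tree-depth at most $q$ have tree-width at most $q$, first-order model-checking on them is fixed-parameter tractable---indeed linear time for fixed $\phi$---by Courcelle's theorem; alternatively one argues directly that, along an elimination forest of height at most $q$, the relevant first-order type of each rooted subtree ranges over a finite set and can be computed bottom-up. Combining the $g^q$ answers according to the Boolean combination from the previous step decides whether $\strA\models\phi$. The overall running time is the polynomial time to compute the colouring plus $g^q$ invocations of a bounded-tree-depth model-checker, i.e.\ $h(|\phi|,\CCC)\cdot n^{\Oof(1)}$, as required.

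The substantive ingredient, and the only place where the bounded-expansion hypothesis is genuinely used, is the low tree-depth colouring step: the existence of a colouring with a \emph{bounded} (independent of $n$) number of colours such that few classes induce bounded tree-depth, together with its efficient computation. Deriving this from the weak-colouring-number bound supplied by Zhu's theorem is standard but not entirely trivial---one typically passes through transitive--fraternal augmentations and bounds the number of colours in terms of $\wcol$---so this is where I would spend the most care; it is also what fixes the function $g=g(\CCC,q)$, and hence the parameter dependence of the algorithm. (The historically earlier proof of Dvo\v{r}\'ak, Kr\'al', and Thomas instead performs quantifier elimination by iterated augmentation while remaining in a bounded-expansion class; I regard the colouring route sketched above as the most transparent to present.)
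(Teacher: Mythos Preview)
The paper does not prove this lemma at all: it is stated with attribution to Dvo\v{r}\'ak, Kr\'al', and Thomas~\cite{dvovrak2013testing} and used as a black box, combined with Lemma~\ref{lem:constructingH} via Lemma~\ref{thm:eickmeyer-combine} to obtain the final corollary. So there is no ``paper's own proof'' to compare against.

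That said, your argument is a valid and well-known alternative route to the result. The original proof of Dvo\v{r}\'ak, Kr\'al', and Thomas proceeds by quantifier elimination: one iteratively augments the structure (staying within a bounded expansion class) so that each existential quantifier can be replaced by a quantifier-free formula over the augmented signature. Your approach via low tree-depth colourings and relativisation to colour classes is the one popularised in later expositions (e.g.\ the Grohe--Kreutzer survey); it trades the somewhat delicate augmentation machinery for a single structural theorem plus Courcelle's theorem on bounded-tree-depth pieces. Both give an FPT algorithm; the colouring route is arguably cleaner to present, while the quantifier-elimination route yields a linear-time algorithm more directly and extends to query enumeration. Your own caveat is well placed: the existence and efficient computability of the low tree-depth colouring with a bounded number of colours is the real content, and you correctly identify it as the step requiring care.
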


\begin{corollary}
  The model-checking problem for successor-invariant first-order
  formulas is fixed parameter tractable on any class of graphs that
  excludes a fixed topological minor.
\end{corollary}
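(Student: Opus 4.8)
The plan is to combine the three ingredients that have already been assembled: the reduction lemma of Eickmeyer et al.\ (Lemma~\ref{thm:eickmeyer-combine}), the construction of a low-admissibility supergraph with a bounded-degree spanning tree (Lemma~\ref{lem:constructingH}), and the model-checking result of Dvo\v{r}\'ak et al.\ for classes of bounded expansion. Concretely, let $\CCC$ be a class of $\tau$-structures such that $G(\CCC)$ excludes some fixed topological minor $K_k$. The goal is to verify that $\CCC$ satisfies the three conditions of Lemma~\ref{thm:eickmeyer-combine}.

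First I would fix the map $\strA\mapsto H(\strA)$, where $H(\strA) := H(G(\strA))$ is the supergraph produced by Lemma~\ref{lem:constructingH} applied to the Gaifman graph $G(\strA)$, which indeed excludes $K_k$ as a topological minor. Condition~1 of Lemma~\ref{thm:eickmeyer-combine} is immediate: by construction $V(H(\strA)) = V(G(\strA))$ and $E(H(\strA)) \supseteq E(G(\strA))$, and the whole construction runs in polynomial time (the order can be computed in time $\Oof(n^5)$, and adding the spanning edges is easily done in polynomial time). Condition~2 follows from the ``furthermore'' part of Lemma~\ref{lem:constructingH}: the spanning edges form a spanning tree $T$ of $H(\strA)$ of maximum degree at most $\delta = \alpha + 6$ (the $\alpha+4$ bound from Claim~\ref{cl:degbnd}, plus the at most $2$ extra from connecting components), where $\alpha$ depends only on $k$; hence $d := \delta$ is a constant depending on $\CCC$ only, and $T$ is computable in polynomial time.

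The only condition requiring a genuine argument is Condition~3: that model-checking for plain first-order formulas is fixed-parameter tractable on the class $\{H(\strA) : \strA \in \CCC\}$. For this I would first observe, as in the Corollary preceding the statement, that this graph class has bounded expansion: by Lemma~\ref{lem:constructingH} we have $\adm_r(H(\strA)) \le f(r)$ for all $r$, with $f$ depending only on $k$, and by Inequality~\eqref{eq:gen-col-ineq} this bounds $\wcol_r$ by a function of $r$ as well, so Zhu's theorem (Theorem~1) applies. Then the result of Dvo\v{r}\'ak et al.\ gives fixed-parameter tractable model-checking for $\FO$ on any bounded expansion class, establishing Condition~3. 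With all three conditions verified, Lemma~\ref{thm:eickmeyer-combine} yields that model-checking for successor-invariant first-order formulas is fixed-parameter tractable on $\CCC$, as required.

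I do not anticipate a serious obstacle here, since all the hard work has been done in Section~\ref{sec:uniform} and in Lemma~\ref{lem:constructingH}: the present statement is essentially a bookkeeping assembly of previously established facts. The one point that deserves a moment's care is the bounded expansion claim for $\{H(\strA)\}$ — one must make sure the admissibility bound $f(r)$ from Lemma~\ref{lem:constructingH} is uniform over the whole class (it is, since it depends only on $k$, via the Grohe--Marx constants $a(k), c(k), d(k), e(k)$ and hence on $\alpha$), so that a single function witnesses bounded expansion of the class in the sense of Zhu's theorem.
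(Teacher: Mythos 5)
Your proposal is correct and follows essentially the same route as the paper: apply the reduction of Eickmeyer et al.\ (Lemma~\ref{thm:eickmeyer-combine}) with $H(\strA)=H(G(\strA))$ from Lemma~\ref{lem:constructingH}, noting that the admissibility bound gives bounded expansion of $\{H(\strA)\}$ (via Inequality~\eqref{eq:gen-col-ineq} and Zhu's theorem) so that the Dvo\v{r}\'ak et al.\ model-checking result supplies the third condition. The paper treats this as an immediate combination of the same three ingredients, so your more explicit verification of the conditions is just a fuller write-up of the intended argument.
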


\section{Conclusions}

In this work we gave several new applications of the generalised
colouring numbers on classes of bounded expansion.  In particular, we
have shown that whenever a graph class $\CCC$ excludes some fixed
topological minor, then any graph from $\CCC$ admits one ordering of
vertices that certifies the boundedness of the generalised colouring
numbers for all radii $r$ at once.  It is tempting to conjecture that
such an ordering exists for any graph class of bounded expansion.

Our construction of the uniform ordering proved to be useful in
showing that model-checking successor-invariant $\FO$ is FPT on any
graph class that excludes a fixed topological minor. We believe that
our construction may be helpful in extending this result to any graph
class of bounded expansion, since both the construction of the order,
and the reasoning of Section~\ref{sec:mc}, are oblivious to the fact
that the graph class excludes some topological minor.  The only place
where we used this assumption is the analysis of the constructed
order.

\pagebreak
\bibliographystyle{plainurl}
\bibliography{ref}

\end{document}